\def\CC {\ensuremath{\mathsf{C}}}
\def\EE {\ensuremath{\mathsf{E}}}
\def\Q {\ensuremath{\mathbb{Q}}}
\def\N {\ensuremath{\mathbb{N}}}
\def\Z {\ensuremath{\mathbb{Z}}}
\def\F {\ensuremath{\mathbb{F}}}
\def\K {\ensuremath{\mathbb{K}}}
\def\Kbar {\ensuremath{\overline{\mathbb{K}}}}
\def\A {\ensuremath{\mathbb{A}}}
\def\M {\ensuremath{\mathsf{M}}}
\def\bd {\ensuremath{\mathbf{d}}}
\def\Tt {\ensuremath{\mathbf{T}}}
\def\Ss {\ensuremath{\mathbf{S}}}
\def\Uu {\ensuremath{\mathbf{U}}}
\def\Vv {\ensuremath{\mathbf{V}}}
\def\x {\ensuremath{\mathbf{x}}}
\def\y {\ensuremath{\mathbf{y}}}
\def\X {\ensuremath{\mathbf{X}}}
\def\Ur {\ensuremath{\mathscr U}}
\def\Wr {\ensuremath{\mathscr W}}
\def\Vr {\ensuremath{\mathscr V}}
\def\Dr {\ensuremath{\mathscr D}}
\def\Dec {\ensuremath{\mathsf{Dec}}}
\def\Ot {O\tilde{~}}
\newtheorem{Theo}{Theorem}
\newtheorem{Prop}{Proposition} 
\newtheorem{Lemma}{Lemma}
\title{On the complexity of computing with zero-dimensional triangular sets}
\author{
  \begin{tabular}{cc}
    Adrien Poteaux$^{\star}$ & \'Eric Schost$^\dagger$ \\ 
    \texttt{adrien.poteaux@lip6.fr} & \texttt{eschost@uwo.ca}\\
  \end{tabular}\\
  ${}^\dagger$: {\scriptsize Computer Science Department, 
    The University of Western Ontario, London, ON, Canada}\\
  ${}^\star$: {\scriptsize UPMC, Univ Paris 06, INRIA, Paris-Rocquencourt center, SALSA Project, LIP6/CNRS UMR 7606 France}
}
\begin{document}

\maketitle

\begin{abstract}
  We study the complexity of some fundamental operations for
  triangular sets in dimension zero. Using Las-Vegas algorithms, we
  prove that one can perform such operations as change of order,
  equiprojectable decomposition, or quasi-inverse computation with a
    cost that is essentially that of {\em modular composition}. Over
    an abstract field, this leads to a subquadratic cost (with respect
    to the degree of the underlying algebraic set). Over a finite
    field, in a boolean RAM model, we obtain a quasi-linear running
    time using Kedlaya and Umans' algorithm for modular composition.

  Conversely, we also show how to reduce the problem of modular
  composition to change of order for triangular sets, so that all
  these problems are essentially equivalent.

  Our algorithms are implemented in Maple; we present some
  experimental results.
\end{abstract}



\section{Introduction}\label{sec:intro}

Triangular sets (in dimension zero, in this paper) are families of
polynomials with a simple triangular structure, which turns out to be
well adapted to solve many problems for systems of polynomial
equations. As a result, there is now a vast literature dedicated to
algorithms with triangular sets, their generalization to {\em regular
  chains}, and applications: without being exhaustive, we refer the
reader
to~\cite{Kalkbrener93,AuLaMo99,MorenoMaza00,Hubert03,Schost03,Schost03b}.

However, from the algorithmic point of view, many questions
remain. Despite a growing amount of work~\cite{LiMoSc09,
  LiMoPa09,BoChHoSc09}, the complexity of many basic operations with
triangular sets (such as set-theoretic operations on their zero-sets,
change of variable order, or arithmetic operations modulo a triangular
set) remains imperfectly understood.

The aim of this paper is to answer some of these questions, by
describing fast algorithms for several operations with triangular
sets, extending our previous results from~\cite{PoSc10}. In
particular, we will focus on the relationship between these problems
and some classical operations on univariate and bivariate polynomials,
called {\em modular composition} and {\em power projection}. To
describe these issues with more details, we need a few definitions.


\subsection{Basic definitions}\label{ssec:equiproj}

\paragraph{Triangular sets.} Let $\K$ be our base field, and let
$\X=X_1,\dots,X_n$ be indeterminates over $\K$; we order them as $X_1
< \cdots < X_n$.  A (monic) triangular set $\Tt=(T_1,\dots,T_n)$, for
this variable order, is a family of polynomials in $\K[\X]$ with the
following triangular structure
\[\Tt \left | \begin{array}{l}
    T_n(X_1,\dots,X_n)\\
    ~~\vdots\\
    T_1(X_1), 
  \end{array}\right .\]
and such that for all~$i$, $T_i$ is monic in $X_i$ and reduced modulo
$\langle T_1,\dots,T_{i-1}\rangle$, in the sense that $\deg(T_i,X_j) <
\deg(T_j,X_j)$ for $j<i$; in particular, $\Tt$ is a zero-dimensional
Gr\"obner basis for the lexicographic order induced by $X_1 < \cdots <
X_n$. In all that follows, we will impose the condition that $\K$ is a
perfect field; often, we will also require that $\langle \Tt\rangle$
is a radical ideal.

We write $d_i=\deg(T_i,X_i)$; $\bd=(d_1,\dots,d_n)$ will be called the
{\em multidegree} of $\Tt$. Define further $R_\Tt=\K[\X]/\langle
\Tt\rangle$. Then, $\delta_\Tt=d_1 \cdots d_n$ is the natural
complexity measure associated to computations modulo $\langle \Tt
\rangle$, as it represents the dimension of the residue class ring
$R_\Tt$ over~$\K$. This integer will be called the {\em degree} of
$\Tt$.

In all our algorithms, elements of $R_\Tt$ are represented on the
monomial basis $B_\Tt=\{X_1^{a_1}\cdots X_n^{a_n} \ | \ 0 \le a_i <
d_i \text{~for all $i$}\}$. Dually, all $\K$-linear forms $R_\Tt \to
\K$ are represented by their values on the basis $B_\Tt$.

\paragraph{Equiprojectable sets.} Not every zero-dimensional radical ideal $I$ in
$\K[\X]$ admits a triangular set of generators: this is the case only
when the zero-set $V=V(I) \subset \overline \K^n$ possesses a
geometric property called {\em equiprojectability}~\cite{AuVa00}. For
the moment, we will simply give an idea of the definition; proper
definitions are in Section~\ref{ssec:equi}.

Roughly speaking, $V$ is equiprojectable if all fibers of the
projection $V\to\Kbar^{n-1}$ have the same cardinality, and similarly
for the further projections to $\Kbar^{n-2},\dots,\Kbar$.  For
instance, of the following pictures, the left-hand one describes an
equiprojectable set, whereas the right-hand one does not (since the
rightmost fiber has a larger cardinality than the others).

\begin{center}
\begin{picture}(0,0)%
\includegraphics{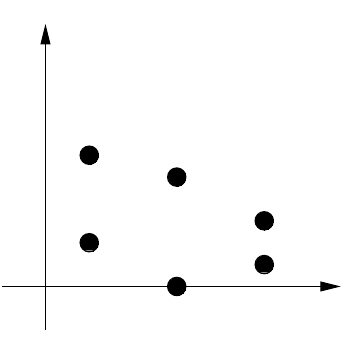}%
\end{picture}%
\setlength{\unitlength}{2763sp}%
\begingroup\makeatletter\ifx\SetFigFont\undefined%
\gdef\SetFigFont#1#2#3#4#5{%
  \reset@font\fontsize{#1}{#2pt}%
  \fontfamily{#3}\fontseries{#4}\fontshape{#5}%
  \selectfont}%
\fi\endgroup%
\begin{picture}(2427,2328)(589,-1630)
\put(3001,-1561){\makebox(0,0)[lb]{\smash{{\SetFigFont{8}{9.6}{\rmdefault}{\mddefault}{\updefault}{\color[rgb]{0,0,0}$X_1$}%
}}}}
\put(901,539){\makebox(0,0)[lb]{\smash{{\SetFigFont{8}{9.6}{\rmdefault}{\mddefault}{\updefault}{\color[rgb]{0,0,0}$X_2$}%
}}}}
\end{picture}%

\hspace{1cm} 
\begin{picture}(0,0)%
\includegraphics{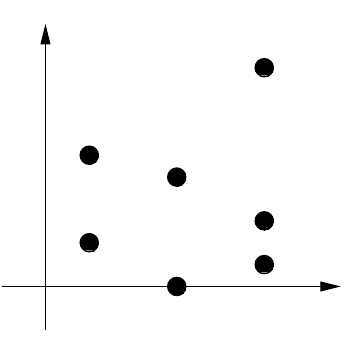}%
\end{picture}%
\setlength{\unitlength}{2763sp}%
\begingroup\makeatletter\ifx\SetFigFont\undefined%
\gdef\SetFigFont#1#2#3#4#5{%
  \reset@font\fontsize{#1}{#2pt}%
  \fontfamily{#3}\fontseries{#4}\fontshape{#5}%
  \selectfont}%
\fi\endgroup%
\begin{picture}(2427,2328)(589,-1630)
\put(3001,-1561){\makebox(0,0)[lb]{\smash{{\SetFigFont{8}{9.6}{\rmdefault}{\mddefault}{\updefault}{\color[rgb]{0,0,0}$X_1$}%
}}}}
\put(901,539){\makebox(0,0)[lb]{\smash{{\SetFigFont{8}{9.6}{\rmdefault}{\mddefault}{\updefault}{\color[rgb]{0,0,0}$X_2$}%
}}}}
\end{picture}%

\end{center}

The relationship with triangular representations is described
in~\cite{AuVa00}: $V$ is equiprojectable if and only if its defining
ideal $I$ is generated by a triangular set (for this equivalence, it
is required that the base field be perfect).

\paragraph{Equiprojectable decomposition.} Any finite set can be
decomposed, in general not uniquely, into a finite union of pairwise
disjoint equiprojectable sets. At the level of ideals, this amounts to
write a zero-dimensional radical ideal $I$ as $I=\langle \Tt^{(1)}
\rangle \cap \cdots \cap \langle \Tt^{(s)} \rangle$, with all
$\Tt^{(j)}$ being triangular sets and all ideals $\langle \Tt^{(j)}
\rangle$ being pairwise coprime. Of course, starting from $I$ in
$\K[\X]$, we want all $\Tt^{(j)}$ to have coefficients in $\K$ as
well.

To solve the non-uniqueness issue, the decomposition of $I$ into an
intersection of {\em maximal} ideals may appear as a good candidate;
however, it suffers from significant drawbacks. For instance,
computing it requires us to factor polynomials over $\K$, or
extensions of it: even if we strengthen our model by requiring that
$\K$ and its finite extensions support this operation, it is usually
prohibitively costly.

There exists another canonical way to find such a decomposition,
called the {\em equiprojectable
  decomposition}~\cite{DaMoScWuXi05}. For instance, among its useful
properties is the fact that it behaves well under specialization: if
$\K$ is the fraction field of a ring $\A$ such as
$\A=k[Z_1,\dots,Z_r]$ or $\A=\Z$ and $\mathfrak{m}$ is a maximal ideal
of $\A$, the equiprojectable decomposition of $(I \bmod \mathfrak{m})$
coincides with the equiprojectable decomposition of $I$, reduced
modulo $\mathfrak{m}$, for ``most'' maximal ideals~$\mathfrak{m}$. We
refer to~\cite{DaMoScWuXi05} for more precise statements; here, we
simply point out that this property makes it for instance possible to
apply modular methods, such as Hensel lifting
techniques~\cite{Schost03,Schost03b}, to recover the equiprojectable
decomposition of $I$ starting from that of $(I \bmod \mathfrak{m})$;
the decomposition of $I$ into maximal ideals does not have this useful
specialization property.

While the definition of the equiprojectable decomposition is
technical, the idea is simple. We will proceed geometrically: to
obtain the equiprojectable decomposition of a finite set $V \subset
\overline \K^n$, we first split it using the cardinality of the fibers
of the projection $\Kbar^n \to \Kbar^{n-1}$. Then we apply the same
process to all the components we obtained, using the projection to
$\Kbar^{n-2}$, and so on (again, we refer the reader to
Section~\ref{ssec:equi} for precise definitions).  The following
picture (from~\cite{DaMoScWuXi05}) shows the equiprojectable
decomposition of the non-equiprojectable set $V$ of the former example.

\begin{center}
\begin{picture}(0,0)%
\includegraphics{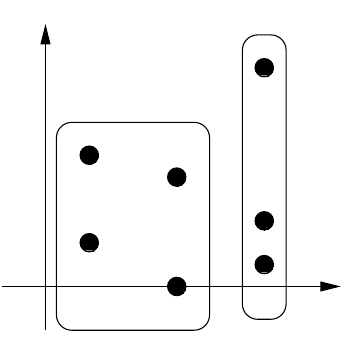}%
\end{picture}%
\setlength{\unitlength}{2763sp}%
\begingroup\makeatletter\ifx\SetFigFont\undefined%
\gdef\SetFigFont#1#2#3#4#5{%
  \reset@font\fontsize{#1}{#2pt}%
  \fontfamily{#3}\fontseries{#4}\fontshape{#5}%
  \selectfont}%
\fi\endgroup%
\begin{picture}(2427,2328)(589,-1630)
\put(3001,-1561){\makebox(0,0)[lb]{\smash{{\SetFigFont{8}{9.6}{\rmdefault}{\mddefault}{\updefault}{\color[rgb]{0,0,0}$X_1$}%
}}}}
\put(901,539){\makebox(0,0)[lb]{\smash{{\SetFigFont{8}{9.6}{\rmdefault}{\mddefault}{\updefault}{\color[rgb]{0,0,0}$X_2$}%
}}}}
\end{picture}%

\end{center}


Each component of the equiprojectable decomposition is an
equiprojectable set. As a result, this construction allows us to
represent an arbitrary finite set $V$, defined over $\K$, by means of
a canonical family of triangular sets with coefficients in $\K$, that
depends only on the order $<$ we have chosen on the variables. The
collection of these triangular sets will thus be denoted by
$\Dr(V,<)$.


\subsection{Our contribution}

Our purpose is to give algorithms for various operations involving a
triangular set, or a family thereof. We will make these questions more
precise below; for the moment, one should have in mind problems such
as modular arithmetic, computation of the equiprojectable
decomposition, or change of order on the variables.

\paragraph{Two central problems.}
The following two problems, called modular composition and power
projection, will be at the heart of our algorithms.  Given a
triangular set $\Tt$ in $\K[X_1,\dots,X_n]$, the general forms of
these questions are the following.
\begin{itemize}
\item {\em modular composition:} given $F$ in $\K[Y_1,\dots,Y_m]$,
  with $\deg(F,Y_i) < f_i$ for all $i$, and $(G_1,\dots,G_m)$ in
  $R_\Tt^m$, compute $F(G_1,\dots,G_m) \in R_\Tt$
\item {\em power projection:} given a linear form $\ell:R_\Tt \to \K$,
  $(G_1,\dots,G_m)$ in $R_\Tt^m$ and bounds $f_1,\dots,f_m$, compute
  $\ell(G_1^{c_1}\cdots G_m^{c_m})$, for all $c_1 < f_1,\dots,c_m <
  f_m$.
\end{itemize}
In both cases, we will write ${\bf f}=(f_1,\dots,f_m)$ and
$\delta_{\bf f}=f_1\cdots f_m$, so that the size of the problem is
characterized by $\delta_{\bf f}$ and $\delta_\Tt$. We will call
$(m,n)$ the {\em parameters} for these questions, and
$\max(\delta_{\bf f},\delta_\Tt)$ the {\em size}. When $\Tt$ and
$G_1,\dots,G_m$ are fixed, the two problems become linear in
respectively $F$ and $\ell$; as it turns out, they are dual problems,
as was observed by Shoup for $m=n=1$~\cite{Shoup94}.

The only cases we will need actually have parameters $(m,n)$ in
$\{1,2\}$.  Besides, we will always suppose that $\delta_{\bf f} \le
\delta_\Tt$, so that all costs can be measured in terms of
$\delta_\Tt$ only. However, even in this simple situation, these
questions have resisted many attempts.

As of now, no quasi-linear time algorithm is known in an algebraic
complexity model (say using an algebraic RAM, counting field
operations at unit cost). Among the best results known to us is that
both operations can be done in time $O(\delta_\Tt^{(\omega+1)/2})$,
where $\omega$ is such that matrices over $\K$ of size $n$ can be
multiplied in time $O(n^\omega)$; we assume $\omega > 2$, otherwise
logarithmic terms may appear. Using the exponent $\omega \le 2.38$
from~\cite{CoWi90}, this gives the subquadratic estimate
$O(\delta_\Tt^{1.69})$.

For $(m,n)=(1,1)$, this claim follows from respectively Brent and
Kung's modular composition algorithm~\cite{BrKu78} and Shoup's power
projection algorithm~\cite{Shoup94}, which is actually the transpose
of Brent and Kung's. For power projection, extensions to parameters
$(m,n)=(1,2)$ are in~\cite{Shoup99,Kaltofen00,BoFlSaSc06}, and the
case $(m,n)=(2,2)$ is partially dealt with in~\cite{PaSc06}. For
completeness, in Section~\ref{ssec:basics}, we will give
straightforward extensions of the Brent-Kung and Shoup algorithms to
all cases $(m,n) \in \{1,2\}$, establishing the bound
$O(\delta_\Tt^{(\omega+1)/2})$ claimed above.

We will thus write $\CC:\N \to \N$ to denote a function such that over
any field, one can do both modular composition and power projection in
$\CC(\delta_\Tt)$ base field operations, under the assumptions that
the parameters $(m,n)$ are in $\{1,2\}$ and $\delta_{\bf f} \le
\delta_\Tt$.  We take $\CC$ super-linear, in the sense that we require
that $\CC(d_1 + d_2) \ge \CC(d_1)+\CC(d_2)$ holds for all
$d_1,d_2$. Then, the former discussion shows that we can take $\CC(d)
\in O(d^{(\omega+1)/2}) \subset O(d^{1.69})$.

Some further restrictions are imposed on the function $\CC$. As is now
customary, we let $\M:\N \to \N$ be such that over any ring,
polynomials of degree less than $d$ can be multiplied in $\M(d)$ base
ring operations; we make the standard superlinearity assumptions
of~\cite[Chapter~8]{GaGe03}. Using Cantor and Kaltofen's
algorithm~\cite{CaKa91}, we can take $\M(d)$ in $O(d \log(d)
\log\log(d))$. Then, to simplify several estimates, we also make the
reasonable assumption that $\M(d)\log(d)$ is in $O(\CC(d))$; this is
the case for $\M(d)$ quasi-linear and $\CC(d)=d^{(\omega+1)/2}$.

\paragraph{The Kedlaya-Umans algorithm and its applications.}
In a boolean model (using a boolean RAM, with logarithmic cost for
data access), and for $\K=\F_q$, it turns out that one can do much
better than in the algebraic model for modular composition and power
projection.

The best known result comes from Kedlaya and Umans'
work~\cite{KeUm10}: for $n=1$, they show how to solve both problems in
$\delta_\Tt^{1+\varepsilon}\log(q)^{1+o(1)}$ bit operations, for all
$\varepsilon > 0$. Their algorithm uses modular techniques
(transferring the problem over $\F_q$ to a problem over $\Z$, and vice
versa), and the idea does not seem to extend easily to an arbitrary
base field. In~\cite{PoSc10}, we described an extension of this result
to any parameters $(m,n) \in \{1,2\}$, with a running time of
$\delta_\Tt^{1+\varepsilon}\Ot(\log(q))$ bit operations for any
$\varepsilon > 0$; the $\Ot$ notation indicates the omission of
polylogarithmic factors of the form $\log\log(q)^{O(1)}$.

In this paper, we will be interested in both models, algebraic and
boolean. Now, for a given algorithm, the cost analysis in the boolean
model differs from the analysis in the algebraic model (where we only
count base field operations) by a few aspects. A minor issue is that
we should count the cost of fetching data (which grows like $\log(a)$,
to access the contents at address $a$). Another difference is that in
the boolean model, we need to take into account the boolean cost of
operations in $\F_q$: disregarding the cost of fetching data, any
arithmetic operations in $\F_q$ can be done in $O\tilde{~}(\log(q))$
bit operations, say $\log(q)\log\log(q)^k$ for some fixed $k\ge 0$.

As a result, in what follows, in all rigor, we should prove most
statements twice, once in the algebraic complexity model and once in
the boolean one. To avoid making the paper excessively heavy, we will
indeed state our main results twice, but {\em all intermediate results
  and proofs will be given for the algebraic model}. There would
actually be no major difference in the boolean model, only some extra
bookkeeping, on the basis of the remarks in the previous paragraph.

Similarly to the algebraic case, $\CC_{\rm bool}$ will thus denote a
function such that one can do both modular composition and power
projection over $\F_q$ using $\CC_{\rm bool}(\delta_\Tt,q)$ bit
operations, assuming that the parameters $(m,n)$ are in
$\{1,2\}$ and that $\delta_{\bf f} \le \delta_\Tt$. As before, we require
that $\CC_{\rm bool}(d_1 + d_2,q) \ge \CC_{\rm bool}(d_1,q)+\CC_{\rm
  bool}(d_2,q)$ holds for all $d_1,d_2,q$. As in the algebraic case,
we will also assume that the cost of polynomial multiplication and
related operations can be absorbed into $\CC_{\rm bool}$: explicitly,
we require that for any function $f(d) \in \Ot(d)$, the function
$f(d)\log(q)\log\log(q)^k$ is in $O(\CC_{\rm bool}(d,q))$, where $k$
is the constant introduced above. The results of~\cite{PoSc10} imply
that we can take $\CC_{\rm bool}(d,q)$ in
$d^{1+\varepsilon}\Ot(\log(q))$ for any $\varepsilon>0$.

\paragraph{Main results.}
The questions we will consider are the following set-theoretic
operations. In all the following items, {\em all triangular sets are
  supposed to generate zero-dimensional radical ideals}.
\begin{itemize}
\item [${\bf P}_1.$] Given triangular sets
  $\Tt^{(1)},\dots,\Tt^{(\ell)}$ and $\Ss^{(1)},\dots,\Ss^{(r)}$ in
  $\K[X_1,\dots,X_n]$, for a variable order $<$, and given a target
  variable order $<'$, compute the equiprojectable decomposition
  $$\Dr \big (V(\Tt^{(1)}) \cup \cdots \cup V(\Tt^{(\ell)}) -
  V(\Ss^{(1)}) - \cdots - V(\Ss^{(r)}),<'\big).$$ We let
  $\delta_1$ be the sum of the degrees of
  $\Tt^{(1)},\dots,\Tt^{(\ell)}$ and $\Ss^{(1)},\dots,\Ss^{(r)}$.
\item [${\bf P}_2.$] Given a triangular set $\Tt$ in
  $\K[X_1,\dots,X_n]$, for a variable order $<$, as well as $F$ in
  $R_\Tt$ and a target variable order $<'$, compute the
  equiprojectable decompositions
  $$\Dr(V(\Tt)\cap V(F),<') \quad\text{and}\quad \Dr(V(\Tt)-
  V(F),<');$$ for every $\Tt'$ in $\Dr(V(\Tt)- V(F),<')$, compute also the
  inverse of $F$ in $R_{\Tt'}$. (Note that even if $F$ is only defined
  modulo $\langle \Tt \rangle$, the two sets above are actually
  defined unambiguously.)  In this case, we let $\delta_2$ be the
  degree of $\Tt$.
\end{itemize}
These questions are general enough to allow us to solve a variety of
classical problems for triangular sets. When the initial and target
orders are the same, and when $r=0$, the first question amounts to
compute the equiprojectable decomposition of a family of triangular
sets, which is a key subroutine in the algorithms
of~\cite{DaMoScWuXi05}. When the initial and target orders are
different, taking only a single triangular set $\Tt$ as input, the
first question allows us to perform a change of order on $\Tt$, and to
output a canonical family of triangular sets for the target order.
Taking the same order for input and output, the second operation
allows us to compute the {\em quasi-inverse} of a polynomial $F$
modulo $\langle \Tt\rangle$, which amounts to split $V(\Tt)$ into its
components where $F$ vanishes, resp. is invertible.  This is an
important subroutine for triangular decomposition
algorithms~\cite{LiMoPa09}.

With that being said, our first main results are the following:
\begin{Theo}
  \label{theo:CtoE}
  In an algebraic RAM complexity model, the following holds over any
  field $\K$ of characteristic $p$:
  \begin{itemize}
  \item if $p=0$ or $p$ is greater than $\delta_1^2$, one can answer
    question ${\bf P}_1$ using an expected
    $O(n\CC(\delta_1)(n+\log(\delta_1)))$ base field
    operations;
  \item if $p=0$ or $p$ is greater than $\delta_2^2$, one can answer
    question ${\bf P}_2$ using an expected
    $O(n\CC(\delta_2)(n+\log(\delta_2)))$ base field operations.
  \end{itemize}
  In a boolean RAM complexity model, the following holds over any
  finite field $\F_q$ of characteristic $p$:
  \begin{itemize}
  \item if $p$ is greater than $\delta_1^2$, one can answer question
    ${\bf P}_1$ using an expected $O(n\CC_{\rm
      bool}(\delta_1,q)(n+\log(\delta_1)))$ bit operations;
  \item if $p$ is greater than $\delta_2^2$, one can answer question
    ${\bf P}_2$ using an expected $O(n\CC_{\rm
      bool}(\delta_2,q)(n+\log(\delta_2)))$ bit operations.
  \end{itemize}
\end{Theo}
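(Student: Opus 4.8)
The plan is to reduce both questions $\mathbf{P}_1$ and $\mathbf{P}_2$ to a small number of elementary operations on triangular sets — essentially products, quotients, GCDs of univariate polynomials over residue rings, and one-dimensional change of order — each of which should be realizable in time $O(\CC(\delta))$ (up to the logarithmic and $n$ factors) by invoking modular composition and power projection through the function $\CC$. First I would handle the geometry. The equiprojectable decomposition is built recursively by stratifying fibers according to their cardinality; concretely, for a variable $X_n$ one looks at the separable part of the ``generic'' minimal polynomial and its GCD behaviour fiber by fiber, which amounts to computing GCDs of univariate polynomials over a product of fields $R_{\Tt'}$ with $\Tt'$ a triangular set in $n-1$ variables. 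The key point is that such a GCD computation, when the ring is not a field, either succeeds (producing a splitting of the triangular set) or fails on a zero divisor, and a zero divisor yields a further splitting; this is where the Las-Vegas/expected running time enters, and where one uses a random linear change of coordinates to ensure that the projections are ``in generic position'' (shape lemma type statements), so the extra $\log(\delta)$ factor and the characteristic hypothesis $p>\delta^2$ show up to guarantee the random choice works with good probability.

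Second, I would treat the reduction to the central problems. Multiplying two triangular sets, i.e. computing a triangular set for $V(\Tt)\cap V(\Ss)$ from $\Tt,\Ss$, or for a union $V(\Tt)\cup V(\Ss)$, requires moving elements of $R_\Tt$ into $R_\Ss$ and back; this is exactly a modular composition in parameters $(m,n)\in\{1,2\}$ with $\delta_{\bf f}\le\delta$, hence costs $O(\CC(\delta))$. Likewise, computing the minimal polynomial of an element, or projecting a linear form along powers of elements (needed for the change of order, via the classical Shoup-style algorithm for going from a triangular representation to a univariate one and back), is power projection, again $O(\CC(\delta))$. The change of order $<\to<'$ I would reduce to at most $n$ successive ``one-variable-at-a-time'' rotations, each of which is a one- or two-variable change of order; since there are $O(n)$ such steps and the recursion for the equiprojectable decomposition itself descends through $n$ variables, one accumulates a factor $n$, and the triangular-set GCDs at each level cost $O(\CC(\delta)\log(\delta))$ because of the fast GCD/half-GCD machinery over the residue ring, whence the stated bound $O(n\CC(\delta)(n+\log(\delta)))$.

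Third, for $\mathbf{P}_2$ specifically, once $V(\Tt)$ has been split into the part where $F$ vanishes and the part where $F$ is invertible — which is again a GCD-of-$(F,T_n\text{-type})$ computation over residue rings, i.e. a quasi-inverse computation — I would obtain the inverse of $F$ in each $R_{\Tt'}$ as a byproduct of the extended GCD, at no extra asymptotic cost, and then compute the equiprojectable decompositions of the two pieces using the machinery of $\mathbf{P}_1$. The reduction to $\mathbf{P}_1$-type subroutines is immediate here since a single triangular set is a (trivial) family.

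Finally, the boolean statements: as announced in the introduction, every intermediate algorithm is stated and analysed in the algebraic model, so one simply replays the same reductions, replacing $\CC$ by $\CC_{\rm bool}$ and charging $\Ot(\log q)$ per field operation plus the logarithmic data-access cost; the superlinearity of $\CC_{\rm bool}$ and the assumption that $\Ot(d)\log(q)\log\log(q)^k\in O(\CC_{\rm bool}(d,q))$ absorb all the polynomial-arithmetic overhead, so the bounds $O(n\CC_{\rm bool}(\delta,q)(n+\log\delta))$ follow with no new idea.

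The main obstacle I expect is the first part: setting up the recursive equiprojectable decomposition so that (i) the random coordinate change puts all the iterated projections simultaneously in generic position with controlled failure probability, which is what forces $p>\delta^2$; (ii) the GCD computations over the non-reduced residue rings are handled uniformly, detecting zero divisors and turning each such detection into a genuine splitting so that the total work telescopes to $O(\CC(\delta)\log(\delta))$ per level rather than blowing up with the number of components; and (iii) all of this is threaded through the change-of-order so the two $n$-factors (one from the variable recursion, one from the order rotation) combine into the single $n(n+\log\delta)$ and not something worse. Once these bookkeeping points are in place, each individual reduction to modular composition or power projection is routine.
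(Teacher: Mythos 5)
Your plan diverges from what actually makes the theorem work, and the divergence is exactly where the difficulty lies. The paper never computes GCDs of univariate polynomials over residue rings $R_{\Tt'}$ of multivariate triangular sets, and never splits on zero divisors in dynamic-evaluation fashion; that route is precisely what produces the exponential factors $K^n$ (or the $\sum_i 2^i d_1\cdots d_{i-1}d_i^{i+1}$-type costs) in the earlier work cited in the introduction, and your assertion that half-GCD machinery over such residue rings runs in $O(\CC(\delta)\log\delta)$ is unsupported: fast Euclidean schemes there must invert leading coefficients that may be zero divisors, and controlling the resulting splitting overhead is exactly the unresolved cost issue the paper is designed to bypass. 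Instead, the proof converts every input to a \emph{univariate representation} $(P,\Uu,\mu)$ with a random separating linear form --- this random choice, via Zippel--Schwartz, is the sole source of the Las-Vegas behaviour and of the hypothesis $p>\delta^2$, not zero-divisor detection --- so that all unions, differences, quasi-inverses and splittings become ordinary GCD/CRT operations on squarefree univariate polynomials over $\K$. The function $\CC$ is invoked only for characteristic polynomials (power projection), inverse modular composition, and the one-variable-at-a-time conversions between univariate and triangular representations with parameters $(m,n)\in\{1,2\}$, costing $O(n\CC(\delta))$ or $O(n^2\CC(\delta))$; your suggestion of moving elements directly between $R_\Tt$ and $R_\Ss$ by a single modular composition is not covered by $\CC$ when $n>2$.

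Two specific ideas are missing from your sketch and cannot be waved through. First, the fiber-cardinality stratification: for the projection $\pi_i$ one takes a random linear form $\nu$ in the projected coordinates, computes the characteristic polynomial $\chi_N$ of $N=\nu_1U_1+\cdots+\nu_mU_m$ modulo $P$, and reads the fiber cardinalities off its squarefree decomposition $\chi_N=C_1^{r_1}\cdots C_s^{r_s}$; the component with fiber size $r_k$ is then $P_k=\gcd(C_k(N),P)$. Second, the cost control: there may be $s$ of order $\sqrt{\delta}$ components, so computing each $\gcd(C_k(N),P)$ separately by a modular composition would cost $s\,\CC(\delta)$, which is too much; the paper batches these GCDs down a subproduct tree built on the $C_k$, using the identity $\Gamma(A,Q)=\Gamma(A,\Gamma(AB,Q))$ with $\Gamma(A,Q)=\gcd(A(N),Q)$, obtaining all $P_k$ in $O(\CC(\delta)\log\delta)$ --- this is where the $\log\delta$ in the statement comes from, not from half-GCD over residue rings. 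Without these two ingredients, and with the GCDs placed over $R_{\Tt'}$ rather than over $\K$, your plan does not yield the bound $O(n\CC(\delta)(n+\log\delta))$.
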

Using the estimates of the previous paragraphs, the former costs are
$O\tilde{~}(n^2 \delta_1^{(\omega+1)/2})$ and $O\tilde{~}(n^2
\delta_2^{(\omega+1)/2})$, and the latter are
$n^2\delta_1^{1+\varepsilon}\Ot(\log(q))$ and $n^2
\delta_2^{1+\varepsilon}\Ot(\log(q))$, for any $\varepsilon>0$.  Since
the input sizes are roughly proportional to $\delta_1$
(resp. $\delta_2$) field elements, this means that with respect to
$\delta_1$ (resp. $\delta_2$), we obtain a subquadratic running time
in the algebraic model, and a quasi-linear running time in the boolean
model.

Before discussing further questions, we briefly comment on the
assumption on the characteristic of $\K$. We do need $2,\dots,\delta_1$
(resp. $2,\dots,\delta_2$) to be invertible in $\K$; otherwise, the
algorithm will not work. The stronger requirement that
$2,\dots,\delta_1^2$ (resp. $2,\dots,\delta_2^2$) are units allows us
to find random elements in $\K$ that are ``lucky'' with large
probability; if this assumption does not hold, the algorithm may still
succeed, but we lose the control on the expected running time.

The basic idea of our algorithms is from~\cite{PoSc10}: we reduce
everything to computations with univariate polynomials, since most
operations above will be easy to deal with in the univariate case. To
this end, we perform a change of representation between our input and
a univariate representation, by using repeatedly modular composition
and power projection.

This raises the question of whether better algorithms may be possible,
bypassing modular composition and power projection.  The following
theorem essentially proves that this is not the case, and that
computing the equiprojectable decomposition is essentially equivalent
to modular composition or power projection, at least for the choice of
parameter $m=1$.

In what follows, let $\EE: \N^2 \to \N$ be such that one can solve
problem ${\bf P}_1$ above in $\EE(n,\delta_1)$ base field operations
(in an algebraic model), for triangular sets in $n$ variables. Then,
our second main result is the following.

\begin{Theo}
  \label{theo:EtoC}
  Let $\Tt$ be a triangular set in $n$ variables, with $n\in\{1,2\}$,
  that generates a radical ideal. Then, we can compute modular
  compositions and power projections modulo $\langle \Tt \rangle$ with
  parameters $(1,n)$ and size $\delta_{\bf f} \le \delta_\Tt$ in time
  $2\EE(4,\delta_\Tt)+\Ot(\delta_\Tt)$.
\end{Theo}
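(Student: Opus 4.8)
The plan is to reduce modular composition (and, by transposition, power projection) to a single call to the equiprojectable-decomposition machinery of ${\bf P}_1$, applied to cleverly chosen triangular sets in $4$ variables. Fix $\Tt$ in $n\le 2$ variables, say $\Tt=(T_1,\dots,T_n)$ in $\K[X_1,\dots,X_n]$, with $\langle\Tt\rangle$ radical, and suppose we are given $G_1,\dots,G_m$ in $R_\Tt$ (here $m=1$, so a single $G$) together with a bound $f\le\delta_\Tt$; we want to compute $F(G)\bmod\langle\Tt\rangle$ for arbitrary $F\in\K[Y]$ with $\deg F<f$. The key observation is that $F(G)$ is the image of $F$ under the $\K$-algebra map $\K[Y]\to R_\Tt$ sending $Y\mapsto G$, and that this map is encoded geometrically by the graph of $G$: introduce a fresh variable $Y$ and consider the triangular set $\Uu=(\Tt, Y - \widehat G)$ in $\K[X_1,\dots,X_n,Y]$ for the order $X_1<\cdots<X_n<Y$, where $\widehat G$ is the canonical reduced representative of $G$ on $B_\Tt$. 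Then $R_\Uu\cong R_\Tt$ and $V(\Uu)$ is exactly the graph $\{(x,G(x)) : x\in V(\Tt)\}$. Projecting onto $Y$ gives a univariate triangular set $Q(Y)$ (the minimal polynomial of $G$ over $\K$, or a multiple thereof, depending on whether distinct points of $V(\Tt)$ have the same $G$-value), and the point of the exercise is to get hold of the change-of-representation isomorphism between $R_\Tt$ and a univariate model, together with the univariate description of $G$ — which is precisely what a change of order, i.e. a call to ${\bf P}_1$ with input order $X_1<\cdots<X_n<Y$ and a target order that puts $Y$ first, produces for us.

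Concretely, first I would run the change of order on $\Uu$ (one call to the ${\bf P}_1$ solver in $n+1\le 3$ variables, hence in $\EE(4,\delta_\Tt)$ up to the stated $\Ot(\delta_\Tt)$ overhead for building $\Uu$ and reading off the answer) to obtain the equiprojectable decomposition $\Dr(V(\Uu),<')$ for the order $<'$ with $Y$ smallest. This yields a family of triangular sets in which $Y$ is the bottom variable, and in particular, on each component, a univariate polynomial in $Y$ and expressions for $X_1,\dots,X_n$ reduced modulo that polynomial; stitching the components together via CRT gives, in $\Ot(\delta_\Tt)$ extra operations, a univariate polynomial $q(Y)$ with $\K[Y]/\langle q\rangle\cong R_\Tt$ and an explicit isomorphism $\psi:R_\Tt\to\K[Y]/\langle q\rangle$ realized by the $X_i\mapsto(\text{poly in }Y)$ data. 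Under this isomorphism $G$ goes to the class of $Y$ itself (by construction of $\Uu$), so $F(G)\bmod\langle\Tt\rangle$ is simply $\psi^{-1}\big(F(Y)\bmod q(Y)\big)$: reduce $F$ modulo $q$ in $\Ot(\delta_\Tt)$ operations, then apply $\psi^{-1}$, which is evaluation of a univariate polynomial of degree $<\delta_\Tt$ at each $X_i$-coordinate — but the $X_i$-coordinate is itself a polynomial in $Y$, so $\psi^{-1}$ is again a (multivariate, but with parameters in $\{1,2\}$) modular composition of exactly the same shape we are trying to compute. This apparent circularity is resolved by noting that $\psi^{-1}$ is a \emph{fixed} change of representation that does not depend on $F$, and it too can be read off from a change-of-order computation: one further call to the ${\bf P}_1$ solver, going back from the $Y$-first order to the original order, supplies the inverse map directly as part of its output. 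That is the source of the factor $2$ in $2\EE(4,\delta_\Tt)$.

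For power projection, I would invoke transposition (the Tellegen / transposition principle, with the remark from the excerpt that modular composition and power projection are dual problems in the sense of Shoup): the algorithm just described for modular composition is $\K$-linear in $F$ once $\Tt$ and $G$ are fixed, it is built entirely out of linear algebra over $\K$ (change of order being here used as a black box that, for fixed geometric data, realizes a fixed change of basis, CRT, modular reduction, and the inverse change of basis), so its transpose computes exactly the power projection $\ell\mapsto(\ell(G^c))_{c<f}$ with the same asymptotic cost; the transposition bookkeeping contributes only to the $\Ot(\delta_\Tt)$ term. I expect the main obstacle to be purely technical rather than conceptual: one must check that $\langle\Uu\rangle$ is still radical (immediate, since $Y-\widehat G$ is linear in the top variable over the radical $\langle\Tt\rangle$) and, more delicately, that the change of order genuinely hands back enough information to reconstruct both $\psi$ and $\psi^{-1}$ in softly-linear time, across all the (possibly several) components of the equiprojectable decomposition — in particular that gluing the per-component data by CRT and the associated fast multi-modular reductions stay within $\Ot(\delta_\Tt)$, and that the assumption $\delta_{\bf f}\le\delta_\Tt$ is what keeps the reduction of $F$ modulo $q$ and all the auxiliary univariate arithmetic inside that budget.
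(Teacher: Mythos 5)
Your construction of the graph ideal $(\Tt,\,Y-\widehat G)$ and the first change of order with $Y$ put first is exactly the paper's starting point, but the next step breaks down: after that change of order the components are triangular sets $R_i(Y),\,U_{i,1}(Y,X_1),\,U_{i,2}(Y,X_1,X_2)$ in \emph{three} variables, and gluing the univariate parts $R_i$ by CRT does \emph{not} give a polynomial $q$ with $\K[Y]/\langle q\rangle\cong R_\Tt$ unless $G$ happens to be a separating element for $V(\Tt)$ (if two points of $V(\Tt)$ share a $G$-value, then $\sum_i\deg R_i<\delta_\Tt$, and "$G\mapsto Y$'' is not an isomorphism). You flag this possibility in passing but the rest of your argument silently assumes it away. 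Moreover, your resolution of the "circularity'' — that a second call to the ${\bf P}_1$ solver "supplies the inverse map $\psi^{-1}$ directly as part of its output'' — exceeds the specification of ${\bf P}_1$: that black box returns a family of triangular sets for the target order, not a way to transport an arbitrary element of a univariate quotient back into $R_\Tt$. The paper's key idea, which is missing from your proposal, is to encode the element to be transported \emph{into the geometry}: reduce $F$ modulo each $R_i$ to get $F_i$, append the linear polynomial $Z-F_i(Y)$ on top of each three-variable component, and run a second change of order (back to $X_1<X_2<Y<Z$) on the union of these four-variable triangular sets; the output then literally contains $Z-K(X_1,X_2)$ with $K=F(G)$, and this works with no separating-element hypothesis. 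That is where the factor $2$ and the bound $\EE(4,\delta_\Tt)$ really come from.

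The power projection half has a second gap: you invoke the transposition principle, but the paper explicitly rules this out in this setting — the principle applies to algorithms given as linear straight-line programs, whereas here the cost is dominated by calls to $\EE$, a non-linear black box (equiprojectable decomposition), and in the reduction the data fed to that box depends on the input being "transposed''. You cannot transpose a black box. The paper instead gives a separate explicit reduction: write $\ell=A\cdot{\rm tr}$ for some $A\in R_\Tt$ using trace duality (the Euler formula, Lemmas~\ref{lemma:trdeg} and~\ref{lemma:trdeg2}), form the four-variable triangular set obtained by stacking $Y-A$ and $Z-G$ on top of $\Tt$, perform one change of order putting $Z$ first, compute the traces $\tau_i$ on the components, and recover ${\rm tr}(A\,G^c)=\sum_i\ell_i(Z^c)$ for all $c$ by transposed multiple reduction in $\Ot(\delta_\Tt)$. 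So both halves of your proposal need the missing ideas (the $Z-F_i(Y)$ encoding, and the trace-form rewriting of $\ell$) rather than the CRT/transposition shortcuts you propose.
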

\noindent In other words, if we are able to compute four-variate
equiprojectable decompositions efficiently, we can compute modular
compositions and power projections efficiently for some small values
of the parameters (which cover in particular the most useful case
$m=n=1$, that is, computing $F(G) \bmod T$, for univariate polynomials
$F,G,T$). Note that an entirely similar result holds for the boolean
model as well.

\paragraph{Organization of the paper.} 
Section~\ref{sec:notation} introduces most basic algorithms used in
the paper: a reminder on modular composition and power projection for
triangular sets in one or two variables, and conversions between
univariate and triangular representations. Section~\ref{sec:CtoE}
gives an algorithm to compute the so-called $\phi$-decomposition of a
zero-dimensional algebraic set $V$, that is, a decomposition according
to the cardinalities of the fibers of a mapping $\phi:V \to
\Kbar^m$. We use this in Section~\ref{ssec:equi} to prove
Theorem~\ref{theo:CtoE}; in that section, we also present experimental
results obtained with a Maple implementation. Finally,
Section~\ref{sec:EtoC} proves Theorem~\ref{theo:EtoC}.

\paragraph{Previous work.}
Let us first review previous work for the questions we consider in the
algebraic complexity model.

For a triangular set $\Tt$, some previous algorithms have costs of the
form $\Ot(4^n \delta_\Tt)$ for multiplication in
$R_\Tt$~\cite{LiMoSc09} or $\Ot(K^n \delta_\Tt)$ for computing
quasi-inverses in $R_\Tt$~\cite{DaMoScXi06}, for $K$ a large
constant. For multiplication, some particular cases with a better
cost are discussed in~\cite{BoChHoSc09}. An algorithm for
regularization, a similar question to quasi-inverse, is given
in~\cite{LiMoPa09,LiMoPa10}; under a non-degeneracy assumption, its
cost grows like $\sum_{2\le i\le n}2^id_1 \cdots d_{i-1}d_i^{i+1}$, up
to polylogarithmic factors. In particular, all these algorithms
involve an extra factor of the form $K^n$.

For change of order, previous work includes~\cite{BoLeMo01} (which
covers more general questions, e.g. in positive dimension), for which
we are not aware of a complexity analysis. A close reference to our
work is~\cite{PaSc06}: the results in that paper are restricted to the
bivariate case, but use similar techniques; our algorithms are
actually a generalization of those in~\cite{PaSc06}. 

It is worth discussing in some detail a natural approach to change of
order, based on resultant computations. In the simplest case of
bivariate systems, changing the order in a triangular set
$(T_1(X_1),T_2(X_1,X_2))$ can be done by first computing the resultant
${\rm res}(T_1,T_2,X_1)$, so as to eliminate $X_1$ --- this would of
course be only the first step of the algorithm, since we would also
have to deal with $X_2$. Still, already this first step may be costly,
since the best algorithm we are aware of takes time
$O\tilde{~}(d_1^2d_2)$, which can be as large as
$O\tilde{~}(\delta_\Tt^2)$. An extension to triangular sets in more
variables could be done along the lines of~\cite{LiMoPa09,LiMoPa10};
roughly speaking, it may induce costs similar to the one seen above
for regularization.

For the problem of computing the equiprojectable decomposition (or
more generally, for our question ${\bf P}_1$), we are not aware of
previous complexity results.

In the boolean model, relying on the results by Kedlaya and Umans
mentioned above, we showed in~\cite{PoSc10} that it is possible to
answer some of our questions in $n^2
\delta_\Tt^{1+\varepsilon}\Ot(\log(q))$ bit operations, for any fixed
$\varepsilon > 0$ (note that exponential terms of the form $K^n$ have
disappeared).  Those results addressed multiplication in $R_\Tt$ and
some restricted forms of inversion and change of order, but did not
consider any issues related to equiprojectable decomposition.


\section{Notations and known results}\label{sec:notation}

In this section, we first recall a few results from the literature,
and describe algorithms for bivariate modular composition and power
projection (thereby proving the claim made in the introduction
regarding the cost of these operations in an algebraic model). In a
second subsection, we discuss the representation of zero-dimensional
algebraic sets by means of {\em univariate representations}, and give
some basic algorithms for this data structure.


\subsection{Basic algorithms}\label{ssec:basics}

In this subsection, we let $\A$ denote either $\K[X_1]$ or
$\K[X_1,X_2]$ and we consider a triangular set $\Tt$ in $\A$; we write
as usual $R_\Tt=\A/\langle \Tt\rangle$ and we let $V$ be the zero-set
of $\Tt$, in either $\Kbar$ or $\Kbar^2$. We will describe a few
useful algorithms for computing in $R_\Tt$; most of them actually
extend to $\A=\K[X_1,\dots,X_n]$, but the costs would then involve an
extra factor of the form~$K^n$, for some constant $K$.

In all this subsection, we will assume that the characteristic of $\K$
is equal to $0$ or greater than~$\delta_\Tt$.

\paragraph{Multiplication and transposed multiplication.}
Using univariate multiplication, we can do the following in
$O(\M(\delta_\Tt))$ operations in $\K$:
\begin{itemize}
\item {\em modular multiplication:} given $A,B \in R_\Tt$, compute $AB
  \in R_\Tt$
\item {\em transposed multiplication:} given a linear form $\ell:R_\Tt
  \to \K$ and $A\in R_\Tt$, compute the linear form $A\cdot \ell:
  R_\Tt \to \K$ defined by $(A\cdot \ell)(B)=\ell(AB)$.
\end{itemize}
See for instance~\cite{GaSh92} and~\cite{PaSc06} for a proof.

\paragraph{Modular composition.} 
In this paragraph, we discuss modular composition with parameters
$(m,n)$, with $m=2$: given $F \in \K[Y_1,Y_2]$, with $\deg(F,Y_1)<f_1$
and $\deg(F,Y_2)<f_2$, and given $G_1,G_2$ in $R_\Tt$, this amounts to
compute $F(G_1,G_2) \in R_\Tt$. For $(m,n)=(1,1)$, that is, with $F$
univariate and $\Tt=(T_1) \in \K[X_1]$, the best-known algorithm is
due to Brent and Kung~\cite{BrKu78}. We present here a straightforward
generalization, under the simplifying assumption that $f_1 f_2 \le
\delta_\Tt$. Note that solving this problem for $m=2$ actually also
solves it for $m=1$, by taking $f_2=1$.

We let $\varepsilon_1,\varepsilon'_1$ and
$\varepsilon_2,\varepsilon'_2$ be positive integers such that
$\varepsilon_1 \varepsilon'_1 \ge f_1$ and $\varepsilon_2
\varepsilon'_2 \ge f_2$ (to be specified below), and we decompose $F$
into ``rectangular slices'' of the form
\[
F=\sum_{i_1 < \varepsilon_1, i_2 < \varepsilon_2} F_{i_1,i_2}(Y_1,Y_2)
Y_1^{\varepsilon'_1 i_1} Y_2^{\varepsilon'_2 i_2},
\]
with each $F_{i_1,i_2}$ in $\K[Y_1,Y_2]$ and satisfying
$\deg(F_{i_1,i_2},Y_1)<\varepsilon'_1$ and
$\deg(F_{i_1,i_2},Y_2)<\varepsilon'_2$. Then, we have
\[
F(G_1,G_2) =\sum_{i_1 < \varepsilon_1, i_2 <
  \varepsilon_2}\varphi_{i_1,i_2} \gamma_1^{i_1} \gamma_2^{i_2},
\]
with $\varphi_{i_1,i_2}=F_{i_1,i_2}(G_1,G_2)$, $\gamma_1 =
G_1^{\varepsilon'_1}$ and $\gamma_2=G_2^{\varepsilon'_2}$, all
equalities being modulo $\langle \Tt \rangle$. This gives the
following algorithm:
\begin{enumerate}
\item Compute all powers $G_1^{j_1} G_2^{j_2} \bmod \langle
  \Tt\rangle$, for $j_1 < \varepsilon'_1$, $j_2 < \varepsilon'_2$,
  $\gamma_1$, as well as $\gamma_2$. This costs a total of
  $\varepsilon'_1 \varepsilon'_2$ multiplications in $R_\Tt$ (one per
  monomial).
\item We deduce all $\varphi_{i_1,i_2}$ by linear algebra: given
  $(i_1,i_2)$, $\varphi_{i_1,i_2}=F_{i_1,i_2}(G_1,G_2) \bmod \langle
  \Tt\rangle$ is obtained by doing the matrix-vector product $M_G
  V_{i_1,i_2}$, where $M_G$ is the matrix of size $(\delta_\Tt \times
  \varepsilon'_1 \varepsilon'_2)$ that contains the coefficients of
  all $G_1^{j_1} G_2^{j_2} \bmod \langle \Tt \rangle$ (in columns) and
  $V_{i_1,i_2}$ is the column-vector of coefficients of $F_{i_1,i_2}$;
  to do it for all $(i_1,i_2)$, we end up doing one matrix product of
  size $(\delta_\Tt \times \varepsilon'_1 \varepsilon'_2) \times
  (\varepsilon'_1 \varepsilon'_2 \times \varepsilon_1 \varepsilon_2)$.
\item We eventually get $F(G_1,G_2) \bmod \langle \Tt \rangle$ by
  using Horner's scheme twice: first, to compute
  \[
  \varphi_{i_1} = \sum_{i_2 < \varepsilon_2} \varphi_{i_1,i_2}
  \gamma_2^{i_2} \bmod \langle \Tt \rangle, i_1 < \varepsilon_1;
  \]
  this is done with $\varepsilon_2-1$ multiplications modulo $\langle
  \Tt \rangle$. Then to compute
  \[
  F(G_1,G_2) \bmod \langle \Tt \rangle = \sum_{i_1 < \varepsilon_1}
  \varphi_{i_1} \gamma_1^{i_1}.
  \]
  The total is $\varepsilon_1 \varepsilon_2-1$ multiplications modulo
  $\langle \Tt \rangle$.
\end{enumerate}
In total, we do at most $\varepsilon_1 \varepsilon_2+\varepsilon'_1
\varepsilon'_2$ multiplications modulo $\langle \Tt \rangle$ and a
matrix product of size $(\delta_\Tt \times \varepsilon'_1
\varepsilon'_2) \times (\varepsilon'_1 \varepsilon'_2 \times
\varepsilon_1 \varepsilon_2)$.  We take $\varepsilon_1 \simeq
\varepsilon'_1 \simeq {f_1}^{1/2}$ and $\varepsilon_2 \simeq
\varepsilon'_2 \simeq {f_2}^{1/2}$, and we write $\varphi=f_1 f_2$.
Then, we end up with $O(\varphi^{1/2})$ multiplications modulo
$\langle \Tt \rangle$ and a matrix product of size $(\delta_\Tt \times
\varphi^{1/2}) \times (\varphi^{1/2} \times \varphi^{1/2})$.  Since by
assumption $\varphi=O(\delta_\Tt)$, the cost is
$O(\M(\delta_\Tt)\delta_\Tt^{1/2} + \delta_\Tt^{(\omega+1)/2})$, which
is $O(\delta_\Tt^{(\omega+1)/2})$.

\paragraph{Power projection.} 
Next, we present an algorithm to solve the power projection problem
for parameters $(m,n)$, with $m=2$. Recall that power projection takes
as input a linear form $\ell\in R_\Tt^*$, $G_1$ and $G_2$ in $R_\Tt$,
some bounds $(f_1,f_2)$, and outputs the sequence
$(\ell(G_1^{i_1}G_2^{i_2} \bmod \langle \Tt \rangle))_{i_1
  <f_1,i_2<f_2}$.

For parameters $(m,n)=(1,1)$, the algorithm is due to
Shoup~\cite{Shoup99} and an extension to $n=2$ is due to
Kaltofen~\cite{Kaltofen00}; these algorithms are dual to Brent-Kung's
algorithm. As for modular composition, we present a straightforward
generalization to $m=2$, with the assumption $f_1 f_2 \le
\delta_\Tt$. The algorithm is obtained by simply transposing steps 2
and 3 of the modular composition algorithm (step 1 is kept as a
preprocessing phase), so the cost estimate is therefore the same.

Let $\varepsilon_1,\varepsilon'_1,\varepsilon_2,\varepsilon'_2$ be as
above, and let again $\gamma_1=G_1^{\varepsilon'_1} \bmod \langle \Tt
\rangle$ and $\gamma_2=G_2^{\varepsilon'_2} \bmod \langle \Tt
\rangle$. For $i_1 < \varepsilon_1$ and $i_2 < \varepsilon_2$, let
\[\ell_{i_1,i_2} = (\gamma_1^{i_1} \gamma_2^{i_2}) \cdot \ell,\]
where the ``dot'' denotes transposed multiplication. It follows that
for $j_1 < \varepsilon_1'$ and $j_2 < \varepsilon_2'$, we have
\[\begin{array}{rcl}
  \ell_{i_1,i_2} (G_1^{j_1}G_2^{j_2} \bmod \langle \Tt \rangle)
  &=&\ell(\gamma_1^{i_1}\gamma_2^{i_2} G_1^{j_1}G_2^{j_2} \bmod \langle \Tt \rangle)\\
  &=&\ell(G_1^{\varepsilon'_1 i_1 + j_1} G_2^{\varepsilon'_2 i_2 +
    j_2}\bmod \langle \Tt \rangle).
\end{array}\] Thus, we compute all $\ell_{i_1,i_2}
(G_1^{j_1}G_2^{j_2} \bmod \langle \Tt \rangle)$, for $i_1 <
\varepsilon_1$, $i_2 <\varepsilon_2$, $j_1 < \varepsilon'_1$ and $j_2
<\varepsilon'_2$, as this gives us the values we need.
\begin{enumerate}
\item First, we compute all powers $G_1^{j_1} G_2^{j_2} \bmod \langle
  \Tt \rangle$, with $j_1 < \varepsilon'_1$ and $j_2 <
  \varepsilon'_2$. This costs $\varepsilon'_1 \varepsilon'_2-1$
  multiplications modulo $\langle \Tt \rangle$.  We need as well
  $\gamma_1$ and $\gamma_2$, for two extra multiplications.
\item Then, we compute the linear forms $\ell_{i_1,i_2}$ incrementally
  by $\ell_{i_1+1,i_2} = \gamma_1 \cdot \ell_{i_1,i_2}$ and
  $\ell_{i_1,i_2+1} = \gamma_2 \cdot \ell_{i_1,i_2}$; each of them
  takes one transposed multiplication.
\item We finally compute all $\ell_{i_1,i_2} (G_1^{j_1}G_2^{j_2} \bmod
  \langle \Tt \rangle)$ by computing the matrix product $M_L M_G$,
  where $M_G$ is the same $(\delta_\Tt \times \varepsilon'_1
  \varepsilon'_2)$ matrix as in the modular composition case, and
  $M_L$ is the $(\varepsilon_1 \varepsilon_2 \times
  \delta_\Tt)$ matrix giving the coefficients of the
  $\ell_{i_1,i_2}$.
\end{enumerate}


In total, we do $\varepsilon_1 \varepsilon_2+\varepsilon'_1
\varepsilon'_2$ (transposed) multiplications modulo $\langle \Tt
\rangle$ and a matrix product of size $(\varepsilon_1 \varepsilon_2
\times \delta_\Tt) \times (\delta_\Tt \times \varepsilon'_1
\varepsilon'_2)$. Let $\varphi=f_1f_2$. With $\varepsilon_1 \simeq
\varepsilon'_1 \simeq {f_1}^{1/2}$ and $\varepsilon_2 \simeq
\varepsilon'_2 \simeq {f_2}^{1/2}$, we end up with $2\varphi^{1/2}$
(transposed) multiplications modulo $\langle \Tt \rangle$ and a matrix
product of size $ (\varphi^{1/2} \times \delta_\Tt) \times (\delta_\Tt
\times \varphi^{1/2})$. Since $\varphi=O(\delta_\Tt)$, the cost is
$O(\M(\delta_\Tt)\delta_\Tt^{1/2} + \delta_\Tt^{(\omega+1)/2})$, which
is $O(\delta_\Tt^{(\omega+1)/2})$.

Together with the former algorithm for modular composition, this shows
indeed that we can take $\CC(d)$ in $O(d^{(\omega+1)/2})$, as claimed
in the introduction.

\paragraph{Trace and characteristic polynomial.}
For $A \in R_\Tt$, we let $\tau(A)\in \K$ and $\chi_A \in \K[X]$ be
respectively the trace and characteristic polynomial of the
multiplication-by-$A$ endomorphism of $R_\Tt$. We discuss briefly how
to compute these objects.

The trace $\tau:R_\Tt \to \K$ is actually a $\K$-linear form. Using
fast multiplication, it is possible to determine its values on the
monomial basis $B_\Tt$ of $R_\Tt$ using $O(\M(\delta_\Tt))$
operations~\cite{PaSc06}.

Since $R_\Tt$ is a reduced algebra, by~\cite[Prop.~4.2.7]{CoLiSh98}
(sometimes called Stickelberger's Theorem), we have
\begin{equation}\label{eq:stick}
  \chi_A=\prod_{\x \in V}(X-A(\x)).  
\end{equation}
We can compute $\chi_A$ using power projection (this is well-known,
see e.g.~\cite{Rouillier99} for a presentation of this algorithm in a
more general context). We start by computing the values of the trace
$\tau$ on the monomial basis $B_\Tt$. By power projection, we can then
compute the traces $\tau(A^i)$, for $i=0,\dots,\delta_\Tt-1$, which
are the power sums of $\chi_A$. By our assumption on the
characteristic of $\K$, we can then use Newton iteration (for the
exponential of a power series) to deduce the characteristic polynomial
$\chi_A$ of $A$ in time $O(\M(\delta_\Tt))$,
see~\cite{BrKu78,Schonhage82}. By our assumption that $\M(d)\log(d) =
O(\CC(d))$, we deduce that the power projection is the dominant part
of this algorithm, so the total cost is $O(\CC(\delta_\Tt))$.

\paragraph{Inverse modular composition.}
A second use of trace formulas is an inverse modular
composition. Given $A$ and $B$ in $R_\Tt$, we want to compute a
polynomial $U \in \K[X]$, if it exists, such that $B=U(A)$ in
$R_\Tt$. In~\cite{PoSc10}, following ideas
from~\cite{Shoup94,Rouillier99}, we recall an algorithm that computes
a polynomial $U$ in time $O(\CC(\delta_\Tt))$, such that if $B$ can
indeed be written as a polynomial in $A$, then $B=U(A)$; note that the
analysis uses the assumption that $\M(d)\log(d)$ is in $O(\CC(d))$,
and our assumption on the characteristic of $\K$. Verifying whether
$B=U(A)$ can be done for another modular composition, so the total
time is $O(\CC(\delta_\Tt)).$


\subsection{Univariate representations}\label{ssec:uni}

We next turn to questions related to the representation of
zero-dimensional algebraic sets. We have already introduced triangular
representations; in this subsection, we will discuss {\em univariate
  representations}, which rely on the introduction of a linear
combination of all variables, and for which most of our questions are
easy to solve.

In all that follows, the {\em degree} $\deg(V)$ of a zero-dimensional
algebraic set $V$ simply denotes its cardinality.

\paragraph{Definition.}
Let $V\subset \Kbar^n$ be a zero-dimensional algebraic set of degree
$\delta$, defined over $\K$, and let $I$ be its defining ideal.

A {\em univariate representation} $\Ur=(P,\Uu,\mu)$ of $V$ consists of
a polynomial $P\in\K[X]$, a sequence of polynomials
$\Uu=(U_1,\dots,U_n) \in \K[X]$, with $\deg(U_i) < \deg(P)$ for all
$i$, as well as a linear form $\mu=\mu_1 X_1 + \cdots+ \mu_n X_n$ with
coefficients in $\K$, such that
\begin{equation}\label{eq:psi}
  \begin{array}{cccc}
    \Psi_\Ur:&\K[\X]/I& \to & \K[X]/\langle P\rangle \\
    &X_1,\dots,X_n & \mapsto & U_1,\dots,U_n\\
    &\mu_1 X_1 + \cdots + \mu_n X_n & \mapsfrom & X
  \end{array}
\end{equation}
is an isomorphism: this allows one to transfer most algebraic
operations to the ring $\K[X]/\langle P\rangle$, where arithmetic is
easy. In particular, the definition implies that $P$ is squarefree,
and that it is the characteristic polynomial of $\mu$ in $\K[\X]/I$.
Thus, we have
$$P=\prod_{\x \in V}(X-\mu(\x))$$
and $x_i=U_i(\mu(\x))$ for all $\x=(x_1,\dots,x_n)$ in $V$ and $i \le
n$.

This kind of representation is familiar: up to a few differences, it
is used for instance in
\cite{GiMo89,AlBeRoWo96,Rouillier99,GiHeMoPa95,GiLeSa01}.

We will call a linear form $\mu=\mu_1 X_1 + \cdots+ \mu_n X_n$ a {\em
  separating element} for $V$ if for all distinct $\x,\x'$ in $V$,
$\mu(\x)\ne\mu(\x')$. One easily sees that $\mu$ is separating if and
only if $V$ admits a univariate representation of the form
$\Ur=(P,\Uu,\mu)$, if and only if the characteristic polynomial $P$ of
$\mu$ in $\K[\X]/I$ is squarefree. This characterization implies the
following well-known lemma.
\begin{Lemma}\label{lemma:prob}
  If the characteristic of $\K$ is at least $\delta^2$, and if
  $\mu_1,\dots,\mu_n$ are chosen uniformly at random in ${\frak
    S}=\{0,\dots,\delta^2-1\}$, the probability that $\mu=\mu_1 X_1
  +\cdots +\mu_n X_n$ be a separating element for $V$ is at least
  $1/2$. The same remains true if $\mu_n$ is set to $1$ and
  $\mu_1,\dots,\mu_{n-1}$ are chosen uniformly at random in ${\frak
    S}$.
\end{Lemma}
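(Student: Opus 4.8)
The plan is to reduce the claim to a counting argument over the set of ``bad'' coefficient vectors, i.e. those $\mu=(\mu_1,\dots,\mu_n) \in \frak{S}^n$ for which $\mu$ fails to be separating. By definition, $\mu$ is non-separating precisely when there exist two distinct points $\x=(x_1,\dots,x_n)$ and $\x'=(x_1',\dots,x_n')$ in $V$ with $\mu(\x)=\mu(\x')$, that is, $\sum_{i=1}^n \mu_i (x_i - x_i') = 0$. For a fixed unordered pair $\{\x,\x'\}$ of distinct points, the vector $\v = \x - \x' \in \Kbar^n$ is nonzero, so the condition $\sum_i \mu_i v_i = 0$ defines a proper $\K$-affine (in fact linear) hyperplane condition on $\mu$. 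First I would make this precise: pick an index $j$ with $v_j \neq 0$; then for any choice of the other $n-1$ coordinates of $\mu$, there is at most one value of $\mu_j$ in $\Kbar$ (hence at most one in $\frak{S}$, using that $v_j$ is invertible, which holds since $\Kbar$ is a field) satisfying the linear relation. Therefore the number of $\mu \in \frak{S}^n$ annihilated by this particular pair is at most $|\frak{S}|^{n-1}$.

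Next I would union-bound over all pairs. There are $\binom{\delta}{2} < \delta^2/2$ unordered pairs of distinct points of $V$, so the number of non-separating $\mu$ in $\frak{S}^n$ is strictly less than $(\delta^2/2)\,|\frak{S}|^{n-1}$. Since $|\frak{S}| = \delta^2$, this count is strictly less than $\tfrac12 |\frak{S}|^n$, so a uniformly random $\mu \in \frak{S}^n$ is separating with probability strictly greater than $1/2$ (in particular at least $1/2$). To handle the second assertion, where $\mu_n$ is fixed to $1$ and only $\mu_1,\dots,\mu_{n-1}$ are random in $\frak{S}$, I would rerun the same argument on the sample space $\frak{S}^{n-1}$: for a fixed pair with difference vector $\v$, if some $v_j \neq 0$ with $j < n$, then again at most $|\frak{S}|^{n-2}$ choices of $(\mu_1,\dots,\mu_{n-1})$ are bad for that pair; and if instead $v_1 = \cdots = v_{n-1} = 0$ but $v_n \neq 0$, then $\mu(\x) - \mu(\x') = v_n \neq 0$ automatically, so that pair is never bad. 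Either way each pair kills at most $|\frak{S}|^{n-2}$ vectors, the union bound gives fewer than $(\delta^2/2)|\frak{S}|^{n-2} = \tfrac12|\frak{S}|^{n-1}$ bad vectors, and the probability of separating is again at least $1/2$.

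I do not expect any genuine obstacle here; the lemma is standard (a Zippel--Schwartz-type estimate). The only points requiring a little care are: (i) confirming that the relevant quantities live in $\K$, not just $\Kbar$ — but the relation $\sum_i \mu_i v_i = 0$ need only be analyzed over $\Kbar$, and the count of solutions $\mu_j$ is bounded there, which suffices; (ii) making sure the characteristic hypothesis ``char $\K \geq \delta^2$'' is used correctly — it guarantees that $\frak{S}=\{0,\dots,\delta^2-1\}$ consists of $\delta^2$ distinct elements of $\K$, so that the uniform distribution on $\frak{S}$ makes sense and $|\frak{S}| = \delta^2$ exactly; and (iii) noting that $V$ being defined over $\K$ is irrelevant to the argument — we only use that $V$ is a finite subset of $\Kbar^n$ of cardinality $\delta$, and that its points, though possibly in $\Kbar$, have differences whose nonvanishing coordinates are invertible in the field $\Kbar$. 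With these remarks in place the proof is a two-line union bound, so I would keep the write-up short.
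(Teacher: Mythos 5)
Your proof is correct and takes essentially the same route as the paper: the paper packages the bad event as the vanishing of the single polynomial $\Delta=\prod_{\x\ne\x'}\bigl(M_1(x_1-x'_1)+\cdots+M_n(x_n-x'_n)\bigr)$ on ${\frak S}^n$ and invokes the Zippel--Schwartz lemma, which for such a product of nonzero linear forms amounts exactly to your per-pair union bound of at most $|{\frak S}|^{n-1}$ bad choices per pair. The only cosmetic difference is the case $\mu_n=1$, which the paper handles by homogeneity of $\Delta$ while you argue factor by factor (a pair with $x_i=x'_i$ for $i<n$ is never bad); both yield the same $1/2$ bound.
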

\begin{proof}
  The above characterization implies that $\mu$ is separating if and
  only if $(\mu_1,\dots,\mu_n)$ does not cancel the polynomial
  $\Delta$ of degree $\delta(\delta-1)/2$ defined by
  $$\Delta(M_1,\dots,M_n)=\prod_{\x,\x' \in V,\ \x\ne\x'} \left
  (M_1(x_1-x'_1) + \cdots + M_n(x_n-x'_n)\right).$$ The
  Zippel-Schwartz lemma implies that there are at most $\delta^{2n}/2$
  roots of $\Delta$ in ${\frak S}^n$, and the first statement
  follows. To get the second one, observe that $\Delta$ is
  homogeneous, so we can set $M_n=1$ without loss of generality; the
  second statement follows.
\end{proof}

\paragraph{Useful algorithms.} We conclude this section with a few
algorithms for univariate representations. Most of what is here is
standard, or at least folklore, although the complexity statements
themselves may be new (e.g., one finds in~\cite{GiLeSa01} an
equivalent of Lemma~\ref{lemma:change} below, but with a quadratic
running time).

\begin{Lemma}\label{lemma:change}
  Given a univariate representation $\Ur=(P,\Uu,\mu)$ of an algebraic
  set $V \subset\Kbar^n$ defined over $\K$, and a linear form
  $\nu=\nu_1 X_1 + \cdots + \nu_n X_n$ with coefficients in $\K$, one
  can decide whether $\nu$ is a separating element for $V$, and if so
  compute the corresponding univariate representation
  $\Vr=(Q,\Vv,\nu)$, in time $O(n \CC(\delta))$, with
  $\delta=\deg(V)$, provided that the characteristic of $\K$ is equal
  to $0$ or greater than $\delta$.
\end{Lemma}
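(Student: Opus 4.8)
The plan is to reduce everything to univariate arithmetic modulo $P$, using the isomorphism $\Psi_\Ur$, and then convert the data into the new representation using a handful of modular compositions and one inverse modular composition (which, by the previous paragraph, all cost $O(\CC(\delta))$ each). Concretely, the element $\nu$ corresponds, under $\Psi_\Ur$, to the residue class
\[
w = \nu_1 U_1 + \cdots + \nu_n U_n \in \K[X]/\langle P\rangle,
\]
which can be formed in $O(n\delta)$ operations. By the characterization recalled just before Lemma~\ref{lemma:prob}, $\nu$ is separating for $V$ if and only if its characteristic polynomial $Q$ in $\K[\X]/I \cong \K[X]/\langle P\rangle$ is squarefree; so the first step is to compute $Q = \chi_w$, the characteristic polynomial of multiplication by $w$ in $\K[X]/\langle P\rangle$. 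Using the trace/characteristic-polynomial routine from Section~\ref{ssec:basics} (power projection plus a Newton iteration for $\exp$ of a power series, valid since $\mathrm{char}(\K)=0$ or $>\delta$), this costs $O(\CC(\delta))$. We then test squarefreeness of $Q$ by a $\gcd$ with $Q'$ in $O(\M(\delta)\log\delta)=O(\CC(\delta))$; if $Q$ is not squarefree, $\nu$ is not separating and we stop.

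Assuming $Q$ is squarefree, it remains to produce the polynomials $\Vv=(V_1,\dots,V_n)$ with $\deg(V_i)<\deg(Q)=\delta$ and $x_i = V_i(\nu(\x))$ for all $\x\in V$. The key observation is that in $\K[X]/\langle Q\rangle$ we want each original coordinate $X_i$ — equivalently, each $U_i\in\K[X]/\langle P\rangle$ transported to $\K[X]/\langle Q\rangle$ — to be expressed as a polynomial in the image of $\nu$. Since $w=\nu_1U_1+\cdots+\nu_nU_n$ is the image of $\nu$ and it generates $\K[X]/\langle P\rangle$ as a $\K$-algebra (because $Q$ is squarefree of degree $\delta$, so $1,w,\dots,w^{\delta-1}$ is a $\K$-basis), we can invoke the inverse modular composition of Section~\ref{ssec:basics}: for each $i$, compute $W_i\in\K[X]$ with $\deg(W_i)<\delta$ such that $U_i = W_i(w)$ in $\K[X]/\langle P\rangle$. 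Each such call costs $O(\CC(\delta))$, for a total of $O(n\CC(\delta))$. Finally we set $V_i := W_i$: for $\x\in V$ we have $\mu(\x)$ playing the role of $X$ in the source chart, $U_i(\mu(\x))=x_i$, and $w$ evaluates to $\nu_1x_1+\cdots+\nu_nx_n=\nu(\x)$ there, so $x_i = W_i(\nu(\x))$, i.e.\ $W_i$ is exactly the sought $V_i$ — after reduction mod $Q$, which is free. Adding the cost of forming $Q$, testing it, and the $n$ inverse modular compositions gives the announced $O(n\CC(\delta))$.

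One point that needs a little care: the inverse-modular-composition routine, as stated, returns \emph{some} polynomial $U$ with the guarantee that \emph{if} $B$ is a polynomial in $A$ then $B=U(A)$; here we know a priori that $U_i$ lies in the subalgebra generated by $w$ (it is all of $\K[X]/\langle P\rangle$ once $Q$ is squarefree), so the guarantee applies and no extra verification is needed — though one could also verify each $W_i(w)=U_i$ by a direct modular composition within the same budget. Strictly speaking one should also double-check that a basis-change matrix between $1,w,\dots,w^{\delta-1}$ and the standard basis is what underlies the inverse modular composition; this is exactly the content of the trace-formula approach of~\cite{Shoup94,Rouillier99} recalled above, so I would simply cite it.

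The step I expect to be the main obstacle — or at least the one requiring the most care in writing — is making precise the claim that the various "transport of structure" identities along $\Psi_\Ur$ hold on the nose (equality in $\K[X]/\langle P\rangle$, not merely after evaluating at points of $V$), so that the inverse modular composition is being applied to genuinely equal ring elements; once that bookkeeping is set up, each ingredient (characteristic polynomial via power projection, squarefree test, inverse modular composition) is a black box from Section~\ref{ssec:basics} with cost $O(\CC(\delta))$, and summing over the $n$ coordinates yields the bound.
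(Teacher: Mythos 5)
Your proposal is correct and follows essentially the same route as the paper's proof: form $N=\nu_1U_1+\cdots+\nu_nU_n$, compute its characteristic polynomial $Q$ modulo $P$ and test squarefreeness to decide separation, then recover each $V_i$ by $n$ inverse modular compositions, for a total of $O(n\CC(\delta))$. Your extra remark that squarefreeness of $Q$ forces $1,N,\dots,N^{\delta-1}$ to be a basis (so the inverse-composition guarantee applies without further verification) is a valid justification of a point the paper leaves implicit.
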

\begin{proof}
  Let $\Psi_{\Ur}$ be as in Equation~\eqref{eq:psi}. We first compute
  $N=\Psi_{\Ur}(\nu)=\nu_1 U_1 + \cdots + \nu_n U_n$; this takes only
  $O(n\delta)$ operations.

  Next, we compute the characteristic polynomial $Q$ of $N$ in
  $\K[X]/\langle P\rangle$; as mentioned before, $\nu$ is a separating
  element for $V$ if and only if $Q$ is squarefree. We have seen that
  computing $Q$ takes time $O(\CC(\delta))$; testing squarefreeness
  takes time $O(\M(\delta)\log(\delta))$, which is by assumption
  $O(\CC(\delta))$.

  When $\mu$ is separating, we can use the algorithm for inverse
  modular composition, to find polynomials $V_1,\dots,V_n$ such that
  $U_i=V_i(N) \bmod Q$ holds for all $i$; then, we have found
  $\Vr=(Q,(V_1,\dots,V_n),\nu)$. In view of the results recalled in
  Subsection~\ref{ssec:basics} on inverse modular composition, the
  total time is $O(n \CC(\delta))$.
\end{proof}

\begin{Lemma}\label{lemma:merge}
  Given univariate representations $\Ur=(P,\Uu,\mu)$ and
  $\Vr=(Q,\Vv,\nu)$ of two algebraic sets $V \subset\Kbar^n$ and $W
  \subset\Kbar^n$ defined over $\K$, one can compute univariate
  representations of either $V \cup W$ or $V-W$ in expected time $O(n
  \CC(\delta))$, with $\delta=\deg(V)+\deg(W)$, provided that the
  characteristic of $\K$ is equal to $0$ or greater than $\delta^2$.
\end{Lemma}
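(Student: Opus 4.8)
The plan is to reduce the computation of a univariate representation of $V\cup W$ or $V-W$ to arithmetic with univariate polynomials, using a common separating element for $V\cup W$, and then to read off the answers from $P$, $Q$, and a few gcd computations. First I would pick a fresh linear form $\lambda=\lambda_1 X_1+\cdots+\lambda_n X_n$ with $\lambda_1,\dots,\lambda_n$ chosen uniformly at random in $\{0,\dots,\delta^2-1\}$ (permissible since $\operatorname{char}\K=0$ or $>\delta^2$). By Lemma~\ref{lemma:prob} applied to the finite set $V\cup W$, which has degree at most $\delta=\deg(V)+\deg(W)$, the form $\lambda$ is separating for $V\cup W$ with probability at least $1/2$; in particular it separates $V$ and separates $W$ individually. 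Using Lemma~\ref{lemma:change} twice, at cost $O(n\,\CC(\delta))$, I would either detect that $\lambda$ fails to separate $V$ (or $W$), in which case I restart, or obtain univariate representations $\Ur'=(P',\Uu',\lambda)$ of $V$ and $\Vr'=(Q',\Vv',\lambda)$ of $W$, both with the \emph{same} linear form $\lambda$.

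Next I would handle the combinatorics at the level of the univariate polynomials $P'$ and $Q'$. Write $P'=\prod_{\x\in V}(X-\lambda(\x))$ and $Q'=\prod_{\x\in W}(X-\lambda(\x))$, both squarefree. Set $D=\gcd(P',Q')$; since $\lambda$ separates $V\cup W$, two points of $V$ and $W$ have the same image under $\lambda$ if and only if they are equal, so $D=\prod_{\x\in V\cap W}(X-\lambda(\x))$. For $V\cup W$ the target polynomial is $R=P' Q'/D=\mathrm{lcm}(P',Q')=\prod_{\x\in V\cup W}(X-\lambda(\x))$, which is again squarefree; for $V-W$ it is $S=P'/D=\prod_{\x\in V-W}(X-\lambda(\x))$. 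All of this costs $O(\M(\delta)\log(\delta))=O(\CC(\delta))$. It remains to produce the parametrizing polynomials. For $V\cup W$, I use CRT: the $i$-th parametrization $W_i$ is the unique polynomial of degree $<\deg(R)$ with $W_i\equiv U'_i \pmod{P'}$ and $W_i\equiv V'_i\pmod{Q'}$ — these are consistent on $\gcd(P',Q')=D$ precisely because both agree with the true coordinate $x_i$ at each point of $V\cap W$, i.e.\ $U'_i\equiv V'_i\pmod D$ — so $W_i$ satisfies $W_i(\lambda(\x))=x_i$ for every $\x\in V\cup W$. For $V-W$, I simply take $W_i=U'_i\bmod S$. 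Each of the $n$ coordinates costs $O(\M(\delta)\log\delta)$, for a total of $O(n\,\CC(\delta))$.

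Finally, I would verify correctness and package the output: check that $R$ (resp.\ $S$) is squarefree and that the $\Psi$-map is an isomorphism, which here amounts to confirming that the number of distinct roots equals $\deg(R)$ (resp.\ $\deg(S)$) — automatic from squarefreeness — so no further check is needed, and return $(R,(W_1,\dots,W_n),\lambda)$, resp.\ $(S,(W_1,\dots,W_n),\lambda)$. Because $\lambda$ is separating with probability $\ge 1/2$ at each try, the expected number of trials is $O(1)$, giving an expected total cost of $O(n\,\CC(\delta))$. The one step that needs genuine care — the ``main obstacle'' — is the CRT reconstruction for $V\cup W$: one must argue that the residues $U'_i$ and $V'_i$ are genuinely compatible modulo $D$ (not merely modulo the squarefree part of each factor), which rests on the fact that $\lambda$ separates \emph{the union}, so that a shared value $\lambda(\x)$ forces the \emph{same} point $\x$ and hence the same coordinate value; once this is established the rest is routine univariate polynomial arithmetic.
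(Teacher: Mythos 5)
Your overall route is the paper's route (draw a random linear form, re-express both inputs with it via Lemma~\ref{lemma:change}, then do gcd/CRT on the univariate side), but there is a genuine gap in the restart criterion. You only restart when Lemma~\ref{lemma:change} reports that $\lambda$ fails to separate $V$ or fails to separate $W$ \emph{individually}. A $\lambda$ that separates $V$ and separates $W$ need not separate $V\cup W$: there may exist $\x\in V\setminus W$ and $\x'\in W\setminus V$ with $\x\ne\x'$ but $\lambda(\x)=\lambda(\x')$. In that event your algorithm does not restart, yet every subsequent identity you rely on breaks: $D=\gcd(P',Q')$ is strictly larger than $\prod_{\x\in V\cap W}(X-\lambda(\x))$, the CRT residues are \emph{not} compatible modulo $D$ (indeed $U'_i(\alpha)=x_i\ne x'_i=V'_i(\alpha)$ for some $i$ at the spurious common root $\alpha$), $R=P'Q'/D$ has degree strictly less than $|V\cup W|$, and $S=P'/D$ drops genuine points of $V-W$. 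Your proposed final check does not catch this, because $R$ and $S$ are automatically squarefree whenever $P'$ and $Q'$ are; your statement that ``no further check is needed'' is exactly where the argument fails. Since this bad event has constant (not vanishing) probability at each trial, the algorithm as written can return an incorrect representation, so it is not a Las Vegas algorithm and the lemma's guarantee (correct output, expected time bound) is not established.

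The fix is the explicit test the paper performs: after computing $D$ (the paper's $S$), also compute $U'_i\bmod D$ and $V'_i\bmod D$ for all $i$ and check they coincide; this is equivalent to $\lambda$ being separating for $V\cup W$ (and then also for $V-W$), costs only $O(n\,\M(\delta))$, and on failure you restart with a new $\lambda$. With this verification added, Lemma~\ref{lemma:prob} applied to $V\cup W$ gives an expected $O(1)$ number of trials and your cost accounting goes through; the rest of your construction (CRT with moduli $P'/D$, $D$, $Q'/D$ for the union, reduction modulo $P'/D$ for the difference) then matches the paper's proof.
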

\begin{proof}
  The following process is repeated until success. We pick a random
  linear form $\lambda=\lambda_1 X_1 + \cdots + \lambda_n X_n$ with
  coefficients in ${\frak S}=\{0,\dots,\delta^2-1\}$, and apply the
  algorithm of Lemma~\ref{lemma:change} to $(\Ur,\lambda)$ and
  $(\Vr,\lambda)$.  The cost of this step is $O(n\CC(\delta))$. In
  case of success, we let $\Ur'=(P',\Uu',\lambda)$ and
  $\Vr'=(Q',\Vv',\lambda)$ be the resulting univariate representations
  of $V$ and $W$; if either subroutine fails, we pick another
  $\lambda$.

  At this stage, $\lambda$ is separating for both $V$ and $W$.  Now,
  we compute the polynomial $S=\gcd(P',Q')$, as well as $P''=P'/S$ and
  $Q''=Q'/S$. We also compute
  $$U''_i=U'_i \bmod P'',\quad T_i=U'_i \bmod S,\quad W_i=V'_i \bmod
  S,\quad V''_i=V'_i \bmod Q''$$ for all $i$. Using fast GCD and fast
  Euclidean division, this can be done in time
  $O(\M(\delta)\log(\delta)+n\M(\delta))$, which is negligible
  compared to the cost of the first step.

  These polynomials will allow us to determine whether $\lambda$ is a
  separating element for $V \cup W$. This is the case if and only if
  for any common root $\alpha$ of $P'$ and $Q'$, the equalities
  $U'_i(\alpha)=V'_i(\alpha)$ hold for all $i \le n$, that is, if
  $T_i=W_i$ holds for all $i$. Doing this test takes time
  $O(n\delta)$; if not all equalities hold, we pick another $\lambda$.
  Note that if $\lambda$ is separating for $V \cup W$, it is
  separating for $V- W$.

  Assuming $\lambda$ is a separating element for $V\cup W$, we obtain
  a univariate representation for $V\cup W$ by computing $(P''S
  Q'',(E_1,\dots,E_n),\lambda)$, where $E_i$ is obtained by applying
  the Chinese Remainder Theorem to $(U''_i, T_i, V''_i)$ and moduli
  $(P'',S,Q'')$, for all $i$. Computing these polynomials takes time
  $O(n \M(\delta)\log(\delta))$, which is again
  $O(n\CC(\delta))$. Similarly, we obtain a univariate representation
  for $V - W$ as $(P'',(U''_1,\dots,U''_n),\lambda)$.

  By Lemma~\ref{lemma:prob}, we expect to test $O(1)$ choices of
  $\lambda$ (precisely, at most 2) before finding a suitable one.  As
  a consequence, the expected running time is $O(n\CC(\delta))$.
\end{proof}

To conclude this section, we mention the following result about
conversions between univariate and triangular representations.

As a preliminary, remember that if $\Ur=(P,\Uu,\mu)$ is a univariate
representation of an algebraic set $V$, there exists an isomorphism
$\Psi_\Ur: \K[\X]/I(V) \to \K[X]/\langle P \rangle$. If furthermore
the defining ideal of $V$ admits a triangular set of generators $\Tt$
for some variable order $<$, we also have $\K[\X]/I(V) \simeq
R_\Tt$. As a result, there exists  change-of-basis isomorphisms
$$\Phi_{\Tt,\Ur}: \K[X]/\langle P \rangle\to R_\Tt 
\quad\text{and}\quad \Psi_{\Tt,\Ur}:R_\Tt \to \K[X]/\langle P
\rangle,$$ which will be useful in the sequel.

\begin{Lemma}\label{lemma:conv}
  Let $V \subset\Kbar^n$ be an algebraic set of degree $\delta$,
  defined over $\K$, and let $I \subset \K[X_1,\dots,X_n]$ be its
  defining ideal; suppose that the characteristic of $\K$ is equal to
  $0$ or greater than $\delta^2$. Let finally $<$ be an order on the
  variables $X_1,\dots,X_n$ and suppose that $I$ is generated by a
  triangular set $\Tt$ for the variable order $<$. Then the following
  holds:
  \begin{itemize}
  \item Given a univariate representation $\Ur=(P,\Uu,\mu)$ of $V$,
    one can compute the triangular set $\Tt$ in expected time
    $O(n^2\CC(\delta))$. Given $A$ in $\K[X]/\langle P\rangle$, one
    can then compute $\Phi_{\Tt,\Ur}(A) \in R_\Tt$ in time
    $O(n\CC(\delta))$.

  \item Given $\Tt$, one can compute a univariate representation
    $\Ur=(P,\Uu,\mu)$ of $V$ in expected time $O(n^2\CC(\delta))$.
    Given $A$ in $R_\Tt$, one can then compute $\Psi_{\Tt,\Ur}(A) \in
    \K[X]/\langle P\rangle$ in time $O(n\CC(\delta))$.
  \end{itemize}
\end{Lemma}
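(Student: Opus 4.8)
The plan is to reduce both directions to the univariate-representation machinery already at hand, using induction on the number of variables $n$. Throughout, recall that from the data of either $\Ur$ or $\Tt$ we can freely move elements back and forth between $R_\Tt$ and $\K[X]/\langle P\rangle$ via the isomorphisms $\Psi_\Ur$, $\Phi_{\Tt,\Ur}$, $\Psi_{\Tt,\Ur}$; the crux is to compute these isomorphisms in the first place, i.e.\ to produce the polynomials $U_1,\dots,U_n$ (resp.\ the $T_i$) explicitly. Key observation: for $i<n$, the image $\mu^{(i)} = \mu_1 X_1 + \cdots + \mu_i X_i$ is a separating element for the projection $V^{(i)}$ of $V$ onto the first $i$ coordinates \emph{provided} $\mu$ was separating for $V$ (since the $X_i$-coordinates of distinct points of $V^{(i)}$ already differ under $\mu^{(i)}$ — here one uses that $\Tt$ exists, so the fiber structure is ``rectangular'' and $\mu^{(i)}$ descends correctly). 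This lets us peel off one variable at a time.

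\emph{From $\Ur$ to $\Tt$.} I would proceed by recursion on $n$. First compute the characteristic polynomial $P_{n-1}$ of $\mu^{(n-1)}$ acting on $\K[X]/\langle P\rangle$ (via $\Psi_\Ur$, this is multiplication by $\mu_1 U_1 + \cdots + \mu_{n-1}U_{n-1}$); this takes $O(\CC(\delta))$ by the characteristic-polynomial algorithm of Subsection~\ref{ssec:basics}. Its squarefree part $\widetilde P_{n-1}$ is the minimal polynomial of $\mu^{(n-1)}$ on $V^{(n-1)}$, and using inverse modular composition one recovers polynomials expressing each $U_i \bmod P$ ($i<n$) and $X_n$ in terms of $\mu^{(n-1)}$; reducing appropriately gives a univariate representation $\Ur^{(n-1)}$ of $V^{(n-1)}$ of degree $\delta_{n-1} = \delta/d_n$. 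Recurse to get $T_1,\dots,T_{n-1}$. For $T_n$: it is the minimal polynomial, over $\K[\X]/\langle T_1,\dots,T_{n-1}\rangle$, of $X_n$; concretely, knowing $\widetilde P_{n-1}$ and $P$, the polynomial $P/\!\gcd(\cdots)$-type manipulation gives, fiber by fiber, the degree-$d_n$ factor vanishing on the fiber. Rather than fight with that directly, the cleaner route is: use inverse modular composition to write $X_n = W(\mu)$ for $W\in\K[X]$, push $W$ through $\Phi$ applied to the already-constructed lower triangular set (which by recursion we can build and compute with at cost $O(n\CC(\delta))$ per application), and then interpolate $T_n$ as the characteristic polynomial of $X_n$ in $R_\Tt$ restricted appropriately. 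Each of the $n$ recursion levels costs $O(n\CC(\delta))$ (the inner loop over the $n$ coordinates $U_i$, each a modular composition), giving the claimed $O(n^2\CC(\delta))$; superlinearity of $\CC$ absorbs the fact that the recursive subproblems have total degree bounded by $\delta$. Once $\Tt$ is known, computing $\Phi_{\Tt,\Ur}(A)$ for a given $A$ is just: lift $A$ to $\K[\X]/I$ via $\Psi_\Ur^{-1}$ (substitute the generators), which is a modular composition with parameters $(1,1)$ into $R_\Tt$ — but this requires knowing how to evaluate a univariate polynomial at $\mu$ \emph{inside} $R_\Tt$, i.e.\ one needs $\mu$ as an element of $R_\Tt$, which is immediate, then one modular composition, for $O(n\CC(\delta))$ (the factor $n$ being the cost of reducing the intermediate bivariate/multivariate data modulo $\Tt$).

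\emph{From $\Tt$ to $\Ur$.} This is the standard Rouillier-style construction. Pick a random $\mu = \mu_1 X_1 + \cdots + \mu_{n-1}X_{n-1} + X_n$ with coefficients in $\frak S = \{0,\dots,\delta^2-1\}$; by Lemma~\ref{lemma:prob} it is separating with probability $\ge 1/2$. View $\mu$ as an element of $R_\Tt$ and compute its characteristic polynomial $P = \chi_\mu$ using power projection, in $O(\CC(\delta))$; if $P$ is not squarefree, retry. Then for each $i$, $X_i$ as an element of $R_\Tt$ must be expressible as a polynomial in $\mu$; apply inverse modular composition (computing the trace form once, then power projections) to get $U_i$ with $X_i = U_i(\mu)$ in $R_\Tt$, and verify by a forward modular composition. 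That is $O(\CC(\delta))$ per $i$, and the per-composition overhead of multivariate reduction contributes another factor $n$, so $O(n\CC(\delta))$ per coordinate and $O(n^2\CC(\delta))$ overall; the expected number of retries is $O(1)$. Computing $\Psi_{\Tt,\Ur}(A)$ for given $A\in R_\Tt$ is then one modular composition $A \mapsto A(U_1,\dots,U_n) \bmod P$ with parameters in $\{1,2\}$, at cost $O(n\CC(\delta))$.

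\emph{Expected main obstacle.} The routine arithmetic is unproblematic; the delicate points are (i) justifying that a single separating $\mu$ for $V$ remains separating for all the projections $V^{(i)}$ — this is where the hypothesis that $I$ is triangular (hence $V$ equiprojectable) is genuinely used, and it must be stated carefully — and (ii) bookkeeping the cost of the multivariate-to-univariate modular compositions so that the ``extra factor $K^n$'' warned about in Subsection~\ref{ssec:basics} does not appear: here it does not, because at each recursive step we only ever compose into a ring $R_\Tt$ with $\Tt$ having \emph{at most two} active variables at a time (the freshly added $X_n$ over the base provided by the univariate representation of $V^{(n-1)}$), so the $O(\delta^{(\omega+1)/2})$-type bound of the bivariate modular-composition algorithm applies, and the only genuine $n$-dependence is the outer loop over coordinates and recursion depth, yielding $n^2$ rather than $K^n$. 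Making this last point airtight — i.e.\ reorganizing every multivariate modular composition as a sequence of $O(n)$ bivariate ones against intermediate univariate representations — is the part of the argument that needs the most care.
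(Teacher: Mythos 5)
There is a genuine gap, and it sits exactly at the point you flag as your ``key observation'': it is false that a separating form $\mu$ for $V$ has truncations $\mu^{(i)}=\mu_1X_1+\cdots+\mu_iX_i$ that separate the projections $V^{(i)}$, even when $V$ is equiprojectable. Take $V$ a graph over $X_1$ (all $d_i=1$ for $i\ge 2$) with $\mu_1=0$ and distinct $X_2$-coordinates: $\mu$ separates $V$ but $\mu^{(1)}=0$ separates nothing; cancellations $\mu_1x_1+\mu_2x_2=\mu_1x_1'+\mu_2x_2'$ give non-degenerate counterexamples as well. Since your whole recursion in the $\Ur\to\Tt$ direction pivots on this, it collapses; the paper instead draws a \emph{fresh} random linear form in the remaining variables at each of the $n$ levels and verifies it via inverse modular composition (expected $O(1)$ retries by Lemma~\ref{lemma:prob}), which is also why the stated cost is only an expected one. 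A second, related error: you propose to express $X_n$ as a polynomial in $\mu^{(n-1)}$, but this is impossible whenever $d_n>1$ (the fiber over a point of $V^{(n-1)}$ contains $d_n$ distinct $X_n$-values). The correct intermediate object is not a univariate representation of the projection but a mixed one, $\bigl(P_{n-1}(X),\,T_{n-1,n}(X,X_n)\bigr)$, where $T_{n-1,n}$ is the degree-$d_n$ fiber polynomial obtained from power projections with parameters $(1,1)$ and $(2,1)$; your alternative recipe for $T_n$ (``characteristic polynomial of $X_n$ in $R_\Tt$ restricted appropriately'') is not a construction --- that characteristic polynomial is univariate of degree $\delta$, not the degree-$d_n$ polynomial with coefficients in $X_1,\dots,X_{n-1}$ that you need.

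In the $\Tt\to\Ur$ direction there is a further unfilled hole: you compute $\chi_\mu$ by power projection and the $U_i$ by inverse modular composition \emph{directly in $R_\Tt$}, but $\CC$ is only defined for parameters $(m,n)$ with $n\in\{1,2\}$; for general $n$-variate triangular sets these primitives carry the $K^n$ overhead the paper explicitly warns about. You acknowledge this in your closing paragraph and defer the fix (``reorganize every multivariate composition into $O(n)$ bivariate ones''), but that reorganization \emph{is} the proof: the paper collapses two variables at a time, computing $P_{n-1}$ as the characteristic polynomial of a random form of $X_1,X_2$ modulo the bivariate pair $(T_1,T_2)$ (a $(1,2)$ power projection), then rewriting the remaining $T_j$ through the change of basis $\K[X_1,X_2]/\langle T_1,T_2\rangle\to\K[X]/\langle P_{n-1}\rangle$ by $(2,1)$ modular compositions, and iterating $n$ times at $O(n\CC(\delta))$ per step. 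Without supplying both the fresh-separating-element mechanism and this two-variables-at-a-time reorganization, the proposal does not establish the $O(n^2\CC(\delta))$ bounds.
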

\begin{proof}
  We will merely describe the main ideas, so as to highlight the roles
  of modular composition and power projection. Details are given in
  \cite[Section~5.3 and~6.3]{PoSc10}, together with worked-out
  examples (the complexity analysis there is given in the boolean
  model, but carries over to the algebraic model without
  difficulty). In both directions, we proceed one variable at a time.
  \begin{itemize}
  \item In the first direction, we change (if needed) the linear form
    $\mu$, so as to ensure that the coefficient of $X_n$ in $\mu$ is
    equal to~$1$; this is done in expected time $O(n\CC(\delta))$ by
    means of Lemmas~\ref{lemma:prob} and~\ref{lemma:change}. This mild
    condition is needed to apply the algorithm of~\cite{PoSc10}; we
    still write the input $\Ur=(P,\Uu,\mu)$.

    Then, we let $\mu'=\mu'_1 X_1+ \cdots+\mu'_{n-2}X_{n-2}+X_{n-1}$
    be a random combination of $X_1,\dots,X_{n-1}$, with coefficients
    in $\{0,\dots,\delta^2-1\}$, whose coefficient in $X_{n-1}$ is
    $1$. We can then replace the single polynomial $P_{n}(X)=P(X)$ by
    a bivariate triangular set
    \[ \left | \begin{array}{l}
        T_{n-1,n}(X,X_n) \\
        P_{n-1}(X),
      \end{array}\right .\]
    where $P_{n-1}$ is the squarefree part of the characteristic
    polynomial of $\mu'_1 U_1+ \cdots+\mu'_{n-2}U_{n-2}+U_{n-1}$
    modulo $P_n$. As we go, we also compute expressions of
    $U_1,\dots,U_{n-1}$ as polynomials in $\mu'$, to allow the process to 
    continue. In the second step, we introduce a triangular set
    \[ \left | \begin{array}{l}
        T_{n-2,n}(X,X_{n-1},X_n) \\
        T_{n-2,n-1}(X,X_{n-1})\\
        P_{n-2}(X)
      \end{array}\right .\]
    in three variables $X,X_{n-1},X_n$, and so on until we
    obtain~$\Tt$.

    Using formulas from~\cite{PaSc06,PoSc10}, going from $(P_{n})$ to
    $(P_{n-1},T_{n-1,n})$ is done by means of power projections with
    parameters $(1,1)$ and $(2,1)$ and size $\delta =\deg(P_{n})$, as
    well as inverse modular compositions, all computed modulo $\langle
    P_{n}\rangle$; the total time is $O(n \CC(\delta))$. The change of
    basis $\K[X]/\langle P_{n}\rangle \to \K[X,X_n]/\langle
    P_{n-1},T_{n-1,n}\rangle$ is done by means of a modular
    composition with parameters $(1,2)$ and size $\delta
    =\deg(P_{n})$, computed modulo $\langle P_{n-1},T_{n-1,n}
    \rangle$; it takes time $O(\CC(\delta))$.

    The further steps are done in the same manner. For instance, going
    from $(P_{n-1},T_{n-1,n})$ to $(P_{n-2},T_{n-2,n-1},T_{n-2,n})$
    requires first to compute $(P_{n-2},T_{n-2,n-1})$, similarly to
    what we did in the first step. Then, we obtain $T_{n-2,n}$ by
    applying the change of basis $\K[X]/\langle P_{n-1}\rangle \to
    \K[X,X_n]/\langle P_{n-2},T_{n-2,n-1}\rangle$ to all coefficients
    of $T_{n-1,n}$.

    There are $n$ such steps before we reach $\Tt$; each takes an
    expected $O(n\CC(\delta))$, so the total time is an expected
    $O(n^2\CC(\delta))$.

    Staring from $A$ in $\K[X]/\langle P\rangle$, we obtain its image
    in $R_\Tt$ by computing its representations in $\K[X,X_n]/\langle
    P_{n-1},T_{n-1,n}\rangle$, and so on. Each conversion is done as
    above by means of modular compositions with parameters $(1,2)$ and
    takes time $O(\CC(\delta))$; the total number of operations is thus
    $O(n\CC(\delta))$.

  \item To compute a univariate representation starting from a
    triangular set $\Tt=(T_1,\dots,T_n)$, we follow the same process
    backward. Starting from $(T_{n,1},\dots,T_{n,n})=(T_1,\dots,T_n)$,
    we first work with $(T_{n,1},T_{n,2})$, and find a univariate
    representation for these two polynomials; this gives us the
    triangular set in $n-1$ variables
    $(P_{n-1},T_{n-1,3},\dots,T_{n-1,n})$. We continue until we reach
    a single polynomial $P_n$, which we will simply write $P$.

    The polynomial $P_{n-1}(X)$ is the characteristic polynomial of a
    random combination of $X_1,X_2$ with coefficients in
    $\{0,\dots,\delta^2-1\}$, computed modulo $\langle
    T_{n,1},T_{n,2}\rangle$; all other polynomials $T_{n-1,j}$ are
    obtained by applying the change-of-basis $\K[X_1,X_2]/\langle
    T_{n,1},T_{n,2}\rangle \to \K[X]/\langle P_{n-1}\rangle$.

    This first step requires a power projection with parameters
    $(1,2)$, as well as modular compositions with parameters $(2,1)$,
    and the cost is an expected $O(n\CC(\delta))$. Since there are $n$
    such steps, the total cost is then an expected
    $O(n^2\CC(\delta))$. The change-of-basis $R_\Tt \to \K[X]/\langle
    P \rangle$ is obtained similarly by means of modular compositions,
    and takes time $O(n\CC(\delta))$.
  \end{itemize}
\end{proof}


\section{The $\phi$-decomposition}\label{sec:CtoE}

In this section, we define the notions of {\em $\phi$-equiprojectable}
sets and {\em $\phi$-decomposition} of a zero-dimensional algebraic
set $V\subset \Kbar^n$, where $\phi$ is a mapping $\Kbar^n \to
\Kbar^m$. We then give an algorithm to compute the
$\phi$-decomposition of $V$, by reducing again this problem to
(mainly) modular composition and power projection.

In what follows, we suppose that $V$ is a zero-dimensional algebraic
subset of $\Kbar^n$ of cardinality $\delta$, defined over $\K$, and we
let $I \subset \K[\X]=\K[X_1,\dots,X_n]$ be its defining ideal. We
make the assumption that the characteristic of $\K$ is equal to $0$ or
greater than $\delta^2$.

We start with the definition of some counting functions. Let $\phi$ be
a mapping $\Kbar^n\to \Kbar^m$, given by polynomials with coefficients
in $\K$. For $\x$ in $V$, we let $c(V,\x, \phi)$ be the cardinality of
the set $\{\x' \in V, \phi(\x')=\phi(\x)\}$: this is the number of
points $\x'$ in $V$ such that $\phi(\x')=\phi(\x)$. Then, we say that
$V$ is {\em $\phi$-equiprojectable} if there exists a positive integer
$d$ such that for all $\x$ in $V$, $c(V,\x,\phi)=d$.

In general, we should not expect $V$ to be $\phi$-equiprojectable.
Then, we define
\[
\mathscr{C}(V,\phi,r)=\{\x \in V,\ c(V,\x,\phi)=r\};
\]
this is the set of all $\x \in V$ with $r$ points in their
$\phi$-fiber. Since $V$ is finite, $\x\mapsto c(V,\x,\phi)$ takes only
finitely many values on $V$, say $r_1 < \cdots < r_s$.  As a
consequence, the sets
\begin{equation}\label{eq:Vri}
  V_{r_1} = \mathscr{C}(V,\phi,r_1),\quad \dots,\quad V_{r_s} = \mathscr{C}(V,\phi,r_s) 
\end{equation}
form a partition of $V$; by construction, all these sets are
$\phi$-equiprojectable. We will write
$$\Dec(V,\phi) =\{V_{r_1},\dots,V_{r_s}\},$$ and we will call this
decomposition the {\em $\phi$-decomposition} of $V$.  Although it may
not be clear from our definition, all $V_{r_i}$ are in fact defined
over $\K$.

\begin{Lemma}\label{lemma:def}
  With notation as in~\eqref{eq:Vri}, $V_{r_1},\dots,V_{r_s}$ are
  defined over $\K$.
\end{Lemma}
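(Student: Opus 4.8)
The statement is that each piece $V_{r_i} = \mathscr{C}(V,\phi,r_i)$ in the $\phi$-decomposition is defined over $\K$, i.e. is stable under the action of the absolute Galois group $\mathrm{Gal}(\Kbar/\K)$. The plan is to argue by Galois descent: since $\K$ is perfect, a subset of $V$ is defined over $\K$ precisely when it is a union of Galois orbits, so it suffices to show that for every $\sigma \in \mathrm{Gal}(\Kbar/\K)$ and every $\x \in V$, one has $c(V,\x,\phi) = c(V,\sigma(\x),\phi)$; this immediately gives $\sigma(V_{r_i}) = V_{r_i}$ for each $i$.

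First I would note that $V$ itself is defined over $\K$ (it is the zero set of the ideal $I \subset \K[\X]$), so $\sigma$ permutes the points of $V$. Next, the map $\phi$ is given by polynomials with coefficients in $\K$, hence it is $\mathrm{Gal}(\Kbar/\K)$-equivariant: $\phi(\sigma(\x)) = \sigma(\phi(\x))$ for all $\x$. Therefore, if $\x'$ ranges over the fiber $\{\x' \in V : \phi(\x') = \phi(\x)\}$, then $\sigma(\x')$ ranges over $\{\y \in V : \phi(\y) = \sigma(\phi(\x)) = \phi(\sigma(\x))\}$; since $\sigma$ is a bijection on $V$, this sets up a bijection between the $\phi$-fiber of $\x$ and the $\phi$-fiber of $\sigma(\x)$. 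Hence $c(V,\x,\phi) = c(V,\sigma(\x),\phi)$, as desired.

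Finally I would conclude: the function $\x \mapsto c(V,\x,\phi)$ is constant on Galois orbits, so each level set $V_{r_i}$ is a union of Galois orbits. Since $\K$ is perfect, a $\sigma$-stable finite subset of $\Kbar^n$ is the zero set of an ideal of $\K[\X]$ (for instance, the ideal of all polynomials in $\K[\X]$ vanishing on it), so each $V_{r_i}$ is defined over $\K$.

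There is no real obstacle here; the only point requiring a little care is the justification that Galois-stability is equivalent to being defined over $\K$, which is exactly where perfectness of $\K$ (an assumption in force throughout the section) is used. Everything else is the straightforward observation that both $V$ and $\phi$ are $\K$-rational, so the combinatorial invariant $c(V,\cdot,\phi)$ is Galois-invariant.
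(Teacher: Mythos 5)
Your proof is correct, but it takes a genuinely different route from the paper's. You argue by Galois descent: since $V$ and $\phi$ are both given by polynomials with coefficients in $\K$, every $\sigma\in\mathrm{Gal}(\Kbar/\K)$ permutes the points of $V$ and commutes with $\phi$, hence carries the $\phi$-fiber of $\x$ bijectively onto that of $\sigma(\x)$; the counting function $c(V,\cdot,\phi)$ is therefore constant on Galois orbits, each level set $V_{r_i}$ is Galois-stable, and perfectness of $\K$ (in force throughout the paper) converts stability into being defined over $\K$. The paper instead avoids explicit Galois theory: it shows that each set $\mathscr{C}'(V,\phi,r)=\{\x\in V:\ c(V,\x,\phi)\ge r\}$ is defined over $\K$ by realizing it as the projection onto the first factor $\Kbar^n$ of an auxiliary set $Z^{(r)}\subset\Kbar^{nr}$, built from the $r$-fold product $V\times\cdots\times V$ by removing the diagonals $\x_i=\x_j$ and intersecting with the conditions $\phi(\x_i)=\phi(\x_j)$ --- all operations (products, set differences, intersections with $\K$-defined sets, projections of zero-dimensional sets) that preserve the field of definition --- and then recovers $V_{r_i}=\mathscr{C}'(V,\phi,r_i)-\mathscr{C}'(V,\phi,r_i+1)$. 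Your argument is shorter and makes the role of perfectness explicit; the paper's construction is purely geometric and exhibits the $\K$-rationality through explicit algebraic operations, at the cost of the auxiliary fiber-product construction. Both establish the lemma.
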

\begin{proof}
  We are going to prove that for any $r \ge 1$,
  $$\mathscr{C}'(V,\phi,r)=\{\x \in V,\ c(V,\x,\phi)\ge r\}$$ is defined over
  $\K$.  Since $\mathscr{C}(V,\phi,r)=\mathscr{C}'(V,\phi,r)-\mathscr{C}'(V,\phi,r+1)$, and since
  the set-theoretic difference of two zero-dimensional algebraic sets
  defined over $\K$ is still defined over $\K$, this will be
  sufficient to establish our claim.

  Fix $r \ge 1$, and let $V^{(r)}$ be the $r$-fold product $V \times
  \cdots \times V \subset \Kbar^{nr}$; obviously, $V^{(r)}$ is defined
  over $\K$. Let $(\x_1,\dots,\x_r)$ be the coordinates on
  $\Kbar^{nr}$, where each $\x_i$ has length $n$, and let
  $$W^{(r)}=V^{(r)}-\cup_{1 \le i < j \le r} \Delta_{i,j},$$ where
  $\Delta_{i,j}$ is defined by $\x_i=\x_j$. Again, $W^{(r)}$ is
  defined over $\K$, and $(\x_1,\dots,\x_r)$ is in $W^{(r)}$ if and
  only if all $\x_i$ are in $V$ and pairwise distinct. Finally, we
  define
  $$Z^{(r)}=W^{(r)} \cap_{1 \le i < j \le r} V(\phi(\x_i)-\phi(\x_j));$$
  then, $\mathscr{C}'(V,\phi,r)$ is the projection of $Z^{(r)}$ on the first
  factor $\Kbar^n$, so it is indeed defined over $\K$, as claimed.
\end{proof}

Before discussing an algorithm that computes $\Dec(V,\phi)$, we prove
a simple lemma that will be used in the next section.
\begin{Lemma}\label{lemma:comp}
  Consider two mappings $\phi:\Kbar^n \to \Kbar^m$ and $\psi:\Kbar^n
  \to \Kbar^p$, such that $\psi = f \circ \phi$, for some mapping $f:
  \Kbar^m \to \Kbar^p$, and suppose that $V$ is
  $\phi$-equiprojectable. Then any $V'$ in $\Dec(V,\psi)$ is both
  $\phi$-equiprojectable and $\psi$-equiprojectable.
\end{Lemma}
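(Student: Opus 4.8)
The plan is to argue purely at the level of fibers, exploiting the factorization $\psi = f\circ\phi$ together with the $\phi$-equiprojectability of $V$. First I would unwind the definitions: fix $V'\in\Dec(V,\psi)$ and a point $\x\in V'$. By the definition of the $\psi$-decomposition in~\eqref{eq:Vri}, $V'$ equals $\mathscr{C}(V,\psi,r)$ for a unique value $r$, so $c(V,\x,\psi)=r$ for every $\x\in V'$; this is exactly the statement that $V'$ is $\psi$-equiprojectable, so one half of the conclusion is immediate from the construction and requires no work. The content is in the $\phi$-equiprojectability of $V'$.

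For that half, the key observation is that because $\psi=f\circ\phi$, the $\psi$-fiber through $\x$ is a union of $\phi$-fibers: if $\phi(\x')=\phi(\x)$ then $\psi(\x')=\psi(\x)$, so each $\phi$-fiber inside $V$ is contained in a single $\psi$-fiber. Hence for $\x\in V$ we can partition the set $\{\x'\in V : \psi(\x')=\psi(\x)\}$ according to the value of $\phi$, and since $V$ is $\phi$-equiprojectable, every block of this partition has the same size $d$ (the common fiber-cardinality for $\phi$ on $V$). Therefore $c(V,\x,\psi)$ is always a multiple of $d$, and more precisely $c(V,\x,\psi) = d\cdot k(\x)$, where $k(\x)$ is the number of distinct values of $\phi$ on the $\psi$-fiber through $\x$. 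Now I would restrict attention to $\x\in V'$: there $c(V,\x,\psi)=r$ is constant, so $k(\x)=r/d$ is also constant on $V'$.

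The last step is to translate ``$k(\x)$ constant on $V'$'' into ``$V'$ is $\phi$-equiprojectable,'' i.e.\ that $c(V',\x,\phi)$ is constant on $V'$. Here I would note the crucial point that $V'$ is a \emph{union of $\phi$-fibers of $V$}: indeed, if $\x\in V'$ and $\phi(\x')=\phi(\x)$ with $\x'\in V$, then $\psi(\x')=\psi(\x)$, so $c(V,\x',\psi)=c(V,\x,\psi)=r$ and hence $\x'\in\mathscr{C}(V,\psi,r)=V'$. Consequently the $\phi$-fiber of $\x$ computed inside $V'$ coincides with the one computed inside $V$, giving $c(V',\x,\phi)=c(V,\x,\phi)=d$ for every $\x\in V'$. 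Thus $V'$ is $\phi$-equiprojectable with common fiber size $d$, which completes the proof. I do not anticipate a genuine obstacle here: the only mild subtlety is keeping straight the distinction between fibers taken in $V$ versus in $V'$, which the ``union of fibers'' remark resolves cleanly; everything else is a direct counting argument.
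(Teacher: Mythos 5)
Your proposal is correct and follows essentially the same route as the paper: both arguments reduce to showing that for $\x \in V'$ the $\phi$-fiber of $\x$ taken in $V'$ coincides with its $\phi$-fiber in $V$ (hence has the common cardinality $d$), using $\psi = f\circ\phi$ to see that any $\x'\in V$ with $\phi(\x')=\phi(\x)$ satisfies $\psi(\x')=\psi(\x)$ and therefore lies in $V'$. Your intermediate counting remark that $c(V,\x,\psi)$ is a multiple of $d$ is harmless but not needed; otherwise the argument matches the paper's, and you even make explicit the step (membership in $V'$ via $c(V,\x',\psi)=c(V,\x,\psi)=r$) that the paper leaves terse.
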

\begin{proof} Let $d$ be the common cardinality of the fibers of the
  restriction of $\phi$ to $V$. Let further $V'$ be in $\Dec(V,\psi)$,
  and let $\x$ be in $V'$. We will show that $c(V',\x,\phi)=d$,
  thereby establishing that $V'$ is $\phi$-equiprojectable ($V'$ is
  $\psi$-equiprojectable by construction).
  
  Remember that $c(V',\x,\phi)$ is the cardinality of the fiber
  $F'=\{\x' \in V', \phi(\x')=\phi(\x)\}$. We claim that we actually
  have $F'=F$, with $F=\{\x' \in V, \phi(\x')=\phi(\x)\}$. Since by
  assumption $|F|=d$, proving $F=F'$ is sufficient to prove that
  $c(V',\x,\phi)=d$.

  Of course, $F'$ is a subset of $F$. Conversely, let $\x'$ be in $F$.
  Then, $\phi(\x)=\phi(\x')$ and our assumption on $\phi$ and $\psi$
  implies that $\psi(\x)=\psi(\x')$. This implies that $\x'$ is in
  $V'$, as claimed.
\end{proof}

We now explain how to compute $\Dec(V,\phi)$.  For simplicity, we will
assume that $m \le n$, and that $\phi$ is a simple linear map (the
algorithm would not be substantially different in general, but a few
extra terms could appear in the cost analysis).
\begin{Prop}\label{prop:dec}
  Consider an algebraic set $V\subset \Kbar^n$ defined over $\K$ and
  of degree $\delta$, and a univariate representation
  $\Ur=(P,\Uu,\mu)$ of $V$, and let
  $\Dec(V,\phi)=\{V_{r_1},\dots,V_{r_s}\}$. Suppose that the following
  conditions are satisfied:
  \begin{itemize}
  \item the characteristic of $\K$ is equal to $0$ or greater than
    $\delta^2$,
  \item $\phi$ is a linear map $\Kbar^n \to \Kbar^m$, of the form
    $\phi(x_1,\dots,x_n)=(x_1,\dots,x_m)$.
  \end{itemize}
  Then we can compute univariate representations $(P_k,\Uu_k,\mu)_{1
    \le k \le s}$ of $V_{r_1},\dots,V_{r_s}$ in expected time
  $O(\CC(\delta)(n+\log(\delta)) )$.
\end{Prop}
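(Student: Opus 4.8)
The plan is to reduce the computation of the $\phi$-decomposition to a power projection followed by a batch of univariate operations, exploiting the fact that for a linear $\phi=(x_1,\dots,x_m)$, the fiber cardinality $c(V,\x,\phi)$ can be read off from the characteristic polynomials of the coordinate functions $U_1,\dots,U_m$ in $\K[X]/\langle P\rangle$. Concretely, via the isomorphism $\Psi_\Ur$, each coordinate $X_i$ corresponds to $U_i\bmod P$, and the point $\x\in V$ corresponds to a root $\alpha$ of $P$ with $x_i=U_i(\alpha)$. Two points $\x,\x'$ of $V$ lie in the same $\phi$-fiber iff $U_i(\alpha)=U_i(\alpha')$ for all $i\le m$; equivalently, writing $N=\mu'_1U_1+\dots+\mu'_mU_m$ for a random linear combination with coefficients in $\{0,\dots,\delta^2-1\}$, they lie in the same fiber iff $N(\alpha)=N(\alpha')$, with probability at least $1/2$ over the choice of the $\mu'_j$ (Lemma~\ref{lemma:prob}, applied to the set $\phi(V)\subset\Kbar^m$ of size at most $\delta$). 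Fix such a good $N$. Then the characteristic polynomial $\chi_N$ of $N$ in $\K[X]/\langle P\rangle$ factors as $\prod_{\x\in V}(X-N(\mu(\x)))$ by Stickelberger (Equation~\eqref{eq:stick}), and the multiplicity with which a given value $\beta$ occurs as a root of $\chi_N$ is exactly $c(V,\x,\phi)$ for any $\x$ in the corresponding fiber.

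The key steps are then as follows. First, compute $N=\mu'_1U_1+\dots+\mu'_mU_m\bmod P$ in $O(m\delta)$ operations, and its characteristic polynomial $\chi_N\in\K[X]$ in $O(\CC(\delta))$ operations using power projection (as recalled in Subsection~\ref{ssec:basics}: compute the trace on the monomial basis, then the power sums $\tau(N^i)$ by power projection, then $\chi_N$ by Newton iteration, using the characteristic assumption). Second, compute the squarefree decomposition $\chi_N=\prod_i g_i^{r_i}$; since the characteristic of $\K$ exceeds $\delta^2\ge\delta$, this is done in $O(\M(\delta)\log(\delta))\subset O(\CC(\delta))$ operations. The distinct multiplicities $r_1<\dots<r_s$ are exactly the fiber-cardinality values, and $g_i=\prod_{r_i(i)}$ where the product is over values $\beta$ of $N$ whose fiber has size $r_i$. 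Third, to get the univariate representation of each $V_{r_k}$ with the \emph{original} separating element $\mu$: the component $V_{r_k}$ corresponds, under $\Psi_\Ur$, to the points $\alpha$ with $N(\alpha)$ a root of $g_k$, i.e. to the ideal $\langle P, g_k(N)\rangle$ in $\K[X]/\langle P\rangle$; the relevant modulus is $P_k=\gcd(P, g_k(N))$. Compute each $g_k(N)\bmod P$ by modular composition with parameters $(1,1)$ and size $\delta$ (cost $O(\CC(\delta))$ per $k$), then $P_k=\gcd(P,g_k(N)\bmod P)$ and $\Uu_k=(U_1\bmod P_k,\dots,U_n\bmod P_k)$ by fast Euclidean division — costing $O(\M(\delta)\log(\delta)+n\M(\delta))$ per $k$. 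Since $\sum_k\deg(P_k)=\delta$, one must be slightly careful: computing $g_k(N)\bmod P$ for all $k$ naively costs $s$ modular compositions, but $s$ can be as large as $\Theta(\sqrt\delta)$; so instead compute, once, a suitable modular composition / compute all $g_k(N)\bmod P$ from the factorization of a single auxiliary polynomial, or simply observe that a single call computes $N$ and one then evaluates all $g_k$ at $N\bmod P$ via a subproduct-tree multipoint-style scheme in $O(\M(\delta)\log(\delta))$ total. The expected $O(1)$ retries (at most $2$, by Lemma~\ref{lemma:prob}) give the claimed expected cost $O(\CC(\delta)(n+\log(\delta)))$, the factor $n$ coming from carrying the $n$ polynomials $U_i$ through the reductions and the $\log(\delta)$ from the squarefree/gcd steps absorbed into $\CC$.

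The main obstacle I anticipate is bounding the cost of extracting all $s$ components simultaneously without incurring an extra factor of $s$: one needs that computing $g_k(N)\bmod P$ for every $k$ (equivalently, splitting $P$ according to the value of $N$) costs $O(\CC(\delta)+\M(\delta)\log(\delta))$ in total rather than per component. The cleanest route is to note that $\chi_N$ already encodes the partition into values-with-given-multiplicity, so it suffices to compute $g_k(N)\bmod P$ for each $k$; but since the $g_k$ have total degree $\le\delta$ and are known, and $N\bmod P$ is fixed, this is exactly one multi-modular / Horner evaluation that can be organized with a subproduct tree in quasi-linear time — the factor-$s$ blowup is avoided because the modular composition $N$ is computed once and only polynomial arithmetic modulo $P$ (of total size $\delta$) remains. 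I would verify carefully that $\gcd(P,g_k(N)\bmod P)$ indeed cuts out precisely $V_{r_k}$ (using that $\mu$ is separating for $V$, so $P$ is squarefree and its roots are in bijection with $V$, and that $g_k$ is the squarefree part of the multiplicity-$r_k$ piece of $\chi_N$) and that the resulting $(P_k,\Uu_k,\mu)$ is a valid univariate representation, i.e. that $P_k$ remains squarefree (it divides $P$) and $\deg(U_i)<\deg(P_k)$ after reduction.
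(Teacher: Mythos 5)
Your overall strategy is the same as the paper's (random linear form $\nu$ in the first $m$ coordinates, characteristic polynomial of $N=\nu_1U_1+\cdots+\nu_mU_m$ via power projection, squarefree decomposition $\chi_N=C_1^{r_1}\cdots C_s^{r_s}$, then $P_k=\gcd(P,C_k(N))$ and reduction of the $U_i$), but there is a genuine gap at exactly the step you flag as the main obstacle, and your proposed fixes do not close it. You cannot ``evaluate all $g_k$ at $N\bmod P$'' within the budget: each $g_k(N)\bmod P$ is a polynomial of degree up to $\delta-1$, and since $s$ can be of order $\sqrt{\delta}$, merely writing down all of them already costs order $\delta^{3/2}$ — the paper makes precisely this remark. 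Nor does a Horner/subproduct-tree evaluation \emph{modulo $P$} help: evaluating a degree-$e_k$ polynomial at $N$ modulo $P$ costs about $e_k$ multiplications modulo $P$, so summing over $k$ (with $\sum_k e_k\le\delta$) gives $O(\delta\,\M(\delta))$, far above $O(\CC(\delta)\log\delta)$; a subproduct tree over the $g_k$ does not reduce the number of multiplications modulo a \emph{fixed} modulus of degree $\delta$. The point you are missing is that the moduli must shrink as you split: the paper computes, top-down over the subproduct tree of the $C_k$, the polynomials $\gamma_{j,i}=\gcd(K_{j,i}(N),P)$ using the identity $\gcd(A(N),Q)=\gcd\bigl(A(N),\gcd((AB)(N),Q)\bigr)$, performing each composition $A(N\bmod\gamma_{j,i})\bmod\gamma_{j,i}$ modulo the current factor and carrying $N\bmod\gamma_{j,i}$ down the tree. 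Because at each level the $\gamma_{j,i}$ are pairwise coprime divisors of $P$, their degrees (like those of the $K_{j,i}$) sum to at most $\delta$, so super-linearity of $\CC$ bounds each level by $O(\CC(\delta))$ and the whole step by $O(\CC(\delta)\log\delta)$ — which, incidentally, is where the $\log(\delta)$ in the proposition's bound really comes from (not from squarefree/gcd steps, which are absorbed into $\CC$). Given that the paper's Theorem~\ref{theo:EtoC} shows these decomposition problems are essentially equivalent to modular composition, your claim that this splitting can be done in quasi-linear time $O(\M(\delta)\log\delta)$ should itself have been a red flag.

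A second, smaller gap: the algorithm is Las Vegas, so you must be able to \emph{detect} that the random $\nu$ fails to separate $W=\phi(V)$; otherwise a bad $\nu$ silently merges fibers of different cardinalities in $\chi_N$ and the output is wrong, and the ``retry at most twice in expectation'' argument has nothing to trigger a retry. The paper verifies separability by checking, via inverse modular composition, that $U_1,\dots,U_m$ can be written as polynomials in $N$ modulo $P$ (cost $O(m\,\CC(\delta))$, which is also part of where the factor $n$ in the bound comes from). Your write-up simply ``fixes a good $N$'' without any verification step, so the expected-time claim is not justified as stated.
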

The rest of this section is devoted to prove this proposition. In what
follows, we write $W=\phi(V)$ and, for all $k \le s$,
$W_{r_k}=\phi(V_{r_k})$. We also write $\Uu=(U_1,\dots,U_n)$, with all
$U_i$ in $\K[X]$. Since for all $\x=(x_1,\dots,x_n)$ in $V$ we have
$x_i=U_i(\mu(\x))$, we deduce that
$\phi(\x)=(U_1(\mu(\x)),\dots,U_m(\mu(\x)))$ for $\x$ in $V$.

\paragraph{Step 1.}
Choose a random linear form $\nu= \nu_1 Y_1 + \cdots + \nu_m Y_m$ with
coefficients in $\{0,\dots,\delta^2-1\}$, compute $N=\nu_1 U_1 +
\cdots + \nu_m U_m$, and compute the characteristic polynomial
$\chi_{N}$ of $N$ in $\K[X]/\langle P\rangle$. Computing $N$ takes
time $O(n\delta)$ and computing its characteristic polynomial takes
time $O(\CC(\delta))$, see Subsection~\ref{ssec:basics}.

The linear form $\nu$ must be a separating element for $W$. To verify
if this is the case, we check whether $U_1,\dots,U_m$ can be written
as polynomials in $N$ modulo $P$. This is done using the algorithm for
inverse modular composition, and takes time $O(m \CC(\delta))$, which
is $O(n \CC(\delta))$. Due to our assumption on the characteristic of
$\K$, we need to test an expected $O(1)$ choices of $\nu$ before
finding a separating element, see Lemma~\ref{lemma:prob}.

Remark that for $\x$ in $V$, $N(\mu(\x))=\nu_1 U_1(\mu(\x)) + \cdots +
\nu_m U_m(\mu(\x))=\nu(\phi(\x))$.

\paragraph{Step 2.} Compute the squarefree decomposition of
$\chi_N$; this takes times $O(\M(\delta)\log(\delta))$,
see~\cite[Chapter~14]{GaGe03}.  Using the previous notation, we claim
this decomposition has the form
\[
\chi_N = C_1^{r_1} \cdots C_s^{r_s}, \quad\text{with}\quad C_k =
\prod_{\y \in W_{r_k}} (X-\nu(\y)).
\]
Indeed, by Stickelberger's Theorem, we have the factorization
\[
\begin{array}{rcl}
  \chi_N &=& \prod_{\x \in V} (X-N(\mu(\x)))\\[2mm]
  &=& \prod_{\x \in V} (X-\nu(\phi(\x))).
\end{array}
\]
For $\y\in W$, let $r(\y)$ be the cardinality of the fiber
$\phi^{-1}(\y) \cap V$. Then we obtain the factorization
\[
\begin{array}{rcl}
  \chi_N&=& \prod_{\y \in W} (X-\nu(\y))^{r(\y)} \\[2mm]
  &=& \prod_{k \le s} \prod_{\y \in W_{r_k}} (X-\nu(\y))^{r_k},
\end{array}
\]
since by construction the projections $W_{r_k}$ are pairwise
disjoint. As $\nu$ is separating for $W$, the linear factors
$X-\nu(\y)$ are pairwise distinct, which proves our claim.

For future use, note that $\sum_{i \le s} \deg(C_i) \le \delta$, since
$\chi_N=C_1^{r_1} \cdots C_s^{r_s}$ has degree $\delta$.

\paragraph{Step 3.} For $k \le s$, compute $P_k=\gcd(C_k(N),P)$.  We
will prove at the end of the section that this can be done in time
$O(\CC(\delta)\log(\delta))$. That proof will be somewhat lengthy; for
the moment, we will only prove that for $k\le s$, we have
\begin{equation}\label{eq:pk}
  P_k = \prod_{\x \in V_{r_k}} (X-\mu(\x)).
\end{equation}
Both sides are squarefree (since they divide $P$), so to prove our
claim it is enough to prove that the roots of $P_k$ are exactly the
values $\mu(\x)$ for $\x$ in $V_{r_k}$. As a preliminary remark,
recall that for all $\x$ in $V$, we have $\nu(\phi(\x))=N(\mu(x))$.
\begin{itemize}
\item For $\x=(x_1,\dots,x_n)$ in $V_{r_k}$, $\phi(\x)$ is in
  $W_{r_k}$ so $\nu(\phi(\x))$ is a root of $C_k$. By the remark
  above, this shows that $\mu(\x)$ is a root of $C_k(N)$. But of
  course $\mu(\x)$ is also a root of $P$, so $\mu(\x)$ is a root of
  $P_k$.
\item Conversely, consider a root $\alpha$ of $P_k$. Since any root of
  $P_k$ is a root of $P$, $\alpha$ is of the form $\mu(\x)$ for some
  $\x$ in $V$. But by assumption $\alpha=\mu(\x)$ is also a root of
  $C_k(N)$, which means that $\nu(\phi(\x))$ is a root of $C_k$.  In
  particular, $\nu(\phi(\x))$ is a root of no other $C_{k'}$, because
  these polynomials are pairwise coprime. This implies that $\phi(\x)$
  belongs to no other $W_{r_{k'}}$, so it must belong to $W_{r_k}$; thus,
  $\x$ is in $V_{r_k}$.
\end{itemize}
Note also that we have $P=P_1 \cdots P_s$, all $P_k$ being pairwise coprime.

\paragraph{Step 4.} For $k \le s$ and $j \le n$, compute $U_{k,j} =
U_j \bmod P_k$. This can be done in time $O(n \M(\delta)
\log(\delta))$ using fast multiple
reduction~\cite[Chapter~10]{GaGe03}, which is $O(n\CC(\delta))$.
Writing $\Uu_k=(U_{k,1},\dots,U_{k,n})$, Eq.~\eqref{eq:pk} shows that
for $k \le s$, $(P_k,\Uu_k,\mu)$ is a univariate representation of
$V_{r_k}$, so we are done.

\paragraph{Analysis of Step 3.} Summing all the costs mentioned above
gives the cost estimate claimed in Proposition~\ref{prop:dec}.  All
that is missing is to prove that, as announced, the cost of computing
the polynomials $P_k$ of Step~3 is $O(\CC(\delta)\log(\delta))$.

Recall that for all $k \le s$, $P_k=\gcd(C_k(N),P)$. We cannot compute
the polynomials $C_k(N)$, or even $C_k(N)\bmod P$, as there are too
many of them: one easily sees that $s$ could be as large as
$\sqrt{\delta}$; each polynomial $C_k(N)\bmod P$ requires to store
$\delta$ field elements, so computing all of them would take time at
least $\delta^{1.5}$.

Therefore, we compute the $P_k$ directly, using divide-and-conquer
techniques. Given polynomials $A,Q \in \K[X]$, we will write
\begin{eqnarray}
  \label{eq:gamma1}\Gamma(A,Q)&=&\gcd(A(N),\,Q)\\
  \label{eq:gamma2}&=&\gcd(A(N \bmod Q)\bmod Q,\,Q),
\end{eqnarray} so that the polynomials we want to compute are
$P_1=\Gamma(C_1,P),\dots,P_s=\Gamma(C_s,P)$. 

Assuming we know $N \bmod Q$, Definition~\eqref{eq:gamma2} shows that
we can compute $\Gamma(A, Q)$ by computing first $A(N \bmod Q)\bmod
Q$, then taking its GCD with $Q$. Since by assumption $\M(d)\log(d)$
is $O(\CC(d))$, we can thus obtain $\Gamma(A, Q)$ in time $O(\CC(d))$
by modular composition and fast GCD, with $d=\max(\deg(A),\deg(Q))$;
we will call this the {\em plain algorithm}. In particular, we could
compute any $P_k$ in time $O(\CC(\delta))$. However, as we mentioned
above, computing all $P_k$ directly in this manner incurs a cost of
the form $s \CC(\delta)$, which is too much for our purposes.

The key equality we will use is the following: for any polynomials
$A,B$, we have
\begin{equation}\label{eq:gamma}\Gamma(A, Q) = \Gamma(A, \Gamma(AB,
  Q)).\end{equation}
Indeed, using Definition~\eqref{eq:gamma1}, the left-hand side
reads
$$\Gamma(A, Q)=\gcd(A(N),\,Q),$$
whereas the right-hand side is
$$\Gamma(A, \Gamma(AB, Q))=\gcd(A(N),\,\gcd((AB)(N),\,Q));$$
equality~\eqref{eq:gamma} follows from the fact that
$\gcd(F_1,G)=\gcd(F_1,\gcd(F_1F_2,G))$ holds for all polynomials
$F_1,F_2,G$.

We are now ready to explain how to complete Step~3.  To simplify our
presentation, we will assume that $s$ is a power of two, of the form
$s=2^w$; when this is not the case, we can complete $C_1,\dots,C_s$ by
dummy polynomials $C_k=1$, so as to replace $s$ by the next power of
two, without affecting the asymptotic running time.

\bigskip\noindent{\em Step 3.1.~} We compute the {\em subproduct tree}
(see details below) associated to
$C_1,\dots,C_s$. From~\cite[Chapter~10]{GaGe03}, this can be done in
time $O(\M(\delta)\log(\delta))$, since we have seen that $\sum_{i \le
  s} \deg(C_i) \le \delta$. Using our assumption on $\M$ and $\CC$,
this is in $O(\CC(\delta))$.

At the top level of the subproduct tree, the root is labelled by
$K_{0,1}=C_1\cdots C_s$; its two children are labelled by $K_{1,1}=C_1
\cdots C_v$ and $K_{1,2}=C_{v+1} \cdots C_s$ with $v=s/2$, and so
on. For $j=0,\dots,w$, the polynomials the $j$th level are written
$K_{j,i}$, with $i=1,\dots,2^j$, so that $K_{j,i}=K_{j+1,2i-1}
K_{j+1,2i}$. At the leaves, for $j=w$, we have $K_{w,i}=C_{i}$.

In what follows, we are going to compute all polynomials
$\Gamma(K_{j,i},P)$, for $j=0,\dots,w$ and $i=1,\dots,2^j$, in a
top-down manner. At the leaves, for $j=w$, we will obtain the
polynomials $\Gamma(K_{w,i},P) =\Gamma(C_i,P)=P_i$ we are looking for.

\bigskip\noindent{\em Step 3.2.~} We compute
$$\gamma_{0,1} = \Gamma(K_{0,1},P)$$ using the plain algorithm, in
time $O(\CC(\delta))$, as well as $N_{0,1} = N \bmod \gamma_{0,1}$ in
time $O(\M(\delta))$, by fast Euclidean division. The latter cost is
negligible.

\bigskip\noindent{\em Step 3.3.~} For $j=0,\dots,w-1$ and
$i=1,\dots,2^j$, assuming we know $\gamma_{j,i}$ and $N_{j,i}=N \bmod
\gamma_{j,i}$, we compute
$$\gamma_{j+1,2i-1} = \Gamma(K_{j+1,2i-1}, \gamma_{j,i})
\quad\text{and}\quad \gamma_{j+1,2i} =
\Gamma(K_{j+1,2i},\gamma_{j,i})$$ followed by
$$N_{j+1,2i-1} = N_{j,i} \bmod \gamma_{j+1,2i-1} \quad\text{and}\quad
N_{j+1,2i} = N_{j,i} \bmod \gamma_{j+1,2i}.$$ Our claim is twofold:
first, we will prove that $\gamma_{j,i}=\Gamma(K_{j,i},P)$ for all
$j,i$; second, we will establish that the total running time is
$O(\CC(\delta)\log(\delta))$. Note that this is enough to finish the
proof of Proposition~\ref{prop:dec}, since we have seen that for
$j=w$, we have $\Gamma(K_{w,i},P)=P_i$.

The proof that $\gamma_{j,i}=\Gamma(K_{j,i},P$) is done by induction
on $j$. By definition, this is true for $\gamma_{0,1}$; for $j > 1$,
this follows from Equation~\eqref{eq:gamma}, first taking
$A=K_{j+1,2i-1}$, $B=K_{j+1,2i}$ and $Q=P$, then $A=K_{j+1,2i}$,
$B=K_{j+1,2i-1}$ and $Q=P$.  Since $\gamma_{j+1,2i-1}$ and
$\gamma_{j+1,2i}$ divide $\gamma_{j,i}$, we can also prove by
induction that $N_{j,i}=N \bmod \gamma_{j,i}$ holds for all $j,i$.

It remains to do the cost analysis. Since $N_{j,i}=N \bmod
\gamma_{j,i}$ is known, we can indeed compute $\gamma_{j+1,2i-1}$ and
$\gamma_{j+1,2i}$ from $K_{j+1,2i-1},\,K_{j+1,2i}$ and $\gamma_{j,i}$
by the plain algorithm in time $O(\CC(d_{j,i}))$, where we write
$$d_{j,i}=\max(\deg(K_{j+1,2i-1}),\,
\deg(K_{j+1,2i}),\,\deg(\gamma_{j,i})) \le \max
(\deg(K_{j,i}),\,\deg(\gamma_{j,i})).$$ The computation of
$N_{j+1,2i-1}$ and $N_{j+1,2i}$ can be done in time
$O(\M(\deg(\gamma_{j,i})))$, which is negligible by assumption.
Hence, the total cost is, up to a constant factor,
$$\sum_{j=0,\dots,w-1} \sum_{i=1,\dots,2^j} 
\CC(\max(\deg(K_{j,i}),\,\deg(\gamma_{j,i}))).$$ This admits the
obvious upper bound
$$\sum_{j=0,\dots,w-1} \sum_{i=1,\dots,2^j} \CC(\deg(K_{j,i})) +
\sum_{j=0,\dots,w-1} \sum_{i=1,\dots,2^j} \CC(\deg(\gamma_{j,i})).$$
Using the super-linearity of $\CC$, we obtain the upper bound
$$\sum_{j=0,\dots,w-1}\CC\left( \sum_{i=1,\dots,2^j} 
  \deg(K_{j,i}) \right ) + \sum_{j=0,\dots,w-1}\CC\left(
  \sum_{i=1,\dots,2^j} \deg(\gamma_{j,i})\right).$$ To conclude the
cost analysis, we will prove the inequalities
$$\sum_{i \le 2^j} \deg(K_{j,i}) \le \delta
\quad\text{and}\quad\sum_{i \le 2^j} \deg(\gamma_{j,i})\le \delta.$$
These inequalities imply a cost upper bound of the form
$\sum_{j=0,\dots,w-1}\CC(\delta)$, up to a constant factor. The claim
on the total cost follows, since $w$ is in $O(\log(\delta))$.
\begin{itemize}
\item The first inequality $\sum_{i \le 2^j} \deg(K_{j,i})\le \delta$
  is a straightforward consequence of the equality $\sum_{i \le 2^j}
  \deg(K_{j,i}) = \sum_{i \le s} \deg(C_i)$, which itself follows from
  the definition of the subproduct tree, and the fact that $\sum_{i
    \le s} \deg(C_i) \le \delta$.
\item To obtain the second inequality $\sum_{i \le 2^j}
  \deg(\gamma_{j,i})\le \delta$, we start by proving that for fixed
  $j$, and for $i \ne i'$, $\gamma_{j,i}$ and $\gamma_{j,i'}$ are
  coprime. Indeed, we have seen that
  $$\gamma_{j,i}=\gcd(K_{j,i}(N), P),$$ where $K_{j,i}$ has the form
  $K_{j,i}=\prod_{\ell \in \kappa_{j,i}} C_\ell$. Here, $\kappa_{j,i}$
  is a set of indices which we will not need to make explicit;
  however, for further use, we note that for $i \ne i'$,
  $\kappa_{j,i}$ and $\kappa_{j,i'}$ are disjoint. The factorization
  of $K_{j,i}$ implies that
  $$\gamma_{j,i}=\gcd(\prod_{\ell \in \kappa_{j,i}} C_\ell(N), P).$$
  Recall now that the polynomials $P_\ell=\gcd(C_\ell(N),P)$ are
  pairwise coprime; as a result, the former equality gives
  $$\gamma_{j,i}=\prod_{\ell \in \kappa_{j,i}} \gcd(C_\ell(N), P)=
  \prod_{\ell \in \kappa_{j,i}} P_\ell.$$ Since for fixed $j$ the sets
  $\kappa_{j,i}$ are pairwise disjoint, and since the polynomials
  $P_\ell$ are pairwise coprime, we deduce that for fixed $j$, the
  polynomials $\gamma_{j,i}$ themselves are pairwise coprime, as
  claimed.

  Since by construction all $\gamma_{j,i}$ divide $P$, the product
  $\prod_{i \le 2^j} \gamma_{j,i}$ must divide $P$ as well, and the
  inequality $\sum_{i \le 2^j} \deg(\gamma_{j,i}) \le \delta$ follows.
\end{itemize}


\section{Proof of Theorem~\ref{theo:CtoE}}\label{ssec:equi}

In this section, we prove Theorem~\ref{theo:CtoE}.  We start by
defining equiprojectable sets and the equiprojectable
decomposition. The algorithms underlying Theorem~\ref{theo:CtoE} are
then straightforward applications of the results of the previous
section.


\subsection{The equiprojectable decomposition}

Let $V\subset \Kbar^n$ be a zero-dimensional algebraic set defined
over $\K$. We suppose that we are given an order $<$ on the variables;
up to renaming them, we can suppose that the order is simply $X_1 <
\cdots < X_n$. For $1 \le i \le n$, we define the projection
\[
\begin{array}{lccc}
  \pi_{i}:& \Kbar^n & \to &\Kbar^i \\ & \x=(x_1,\dots,x_n) & \mapsto &
  (x_1,\dots,x_i).
\end{array}
\]
Then, we say that $V$ is equiprojectable if it is
$\pi_i$-equiprojectable for $i=1,\dots,n$; in other words, $V$ is
equiprojectable if all fibers of $\pi_1$ on $V$ have a common
cardinality $\delta_1$, all fibers of $\pi_2$ on $V$ have a common
cardinality $\delta_2$, etc.

In general, we should not expect $V$ to be equiprojectable. There are
potentially many ways to decompose $V$ into equiprojectable sets; the
equiprojectable decomposition will be a canonical partition of $V$
into pairwise disjoint equiprojectable sets, that will all be defined
over $\K$.

We will actually define a sequence $\Dec(V,i,<)$, for $i=n,\dots,1$,
which will all be partitions of $V$, refining one another. At index
$n$, we write $\Dec(V,n,<) = \{V\}$. Then, for $i<n$, assuming that we
have defined
$$\Dec(V,i+1,<)=\{V_{i+1,1},\dots,V_{i+1,s_{i+1}}\},$$
we obtain $\Dec(V,i,<)$ by computing the $\pi_i$-decomposition of
every element in $\Dec(V,i+1,<)$:
$$\Dec(V,i,<) =\cup_{k \le s_{i+1}} \Dec(V_{i+1,k},\pi_i),$$ 
which we rewrite as $$\Dec(V,i,<)=\{V_{i,1},\dots,V_{i,s_{i}}\}.$$
An easy
decreasing induction proves that for $i=1,\dots,n$ and $k \le s_i$,
every $V_{i,k}$ is $\pi_j$-equiprojectable for $j=i,\dots,n$:
\begin{itemize}
\item For $i=n$, $\Dec(V,n,<)$ is simply $\{V\}$, which is
  $\pi_n$-equiprojectable (since $\pi_n$ is the identity).
\item For $i<n$, assuming that the claim holds for $\Dec(V,i+1,<)$, we
  prove it for $\Dec(V,i,<)$. To do so, it is enough to take
  $V_{i+1,k}$ in $\Dec(V,i+1,<)$ and prove that every $V'$ in
  $\Dec(V_{i+1,k},\pi_i)$ is $\pi_j$-equiprojectable, for
  $j=i,\dots,n$.

  Obviously, $V'$ is $\pi_i$-equiprojectable. Besides, since by the
  induction assumption $V_{i+1,k}$ is $\pi_j$-equiprojectable for
  $j=i+1,\dots,n$, Lemma~\ref{lemma:comp} implies that $V'$ is also
  $\pi_j$-equiprojectable for $j=i+1,\dots,n$.
\end{itemize}
Taking $i=1$, $\Dec(V,1,<)$ is the {\em equiprojectable decomposition}
of $V$; we will actually denote it by $\Dec(V,<)$. Dropping the
subscript ${}_1$, we will write
$$\Dec(V,<)=\{V_{1},\dots,V_{s}\}.$$ This is thus a decomposition of $V$
into pairwise disjoint equiprojectable sets $V_j$.

Aubry and Valibouze proved in~\cite{AuVa00} that an algebraic set is
equiprojectable if and only if its defining ideal is generated by a
triangular set. Besides, by Lemma~\ref{lemma:def}, each $V_j$ is
defined over $\K$; thus, its defining ideal is generated by a
triangular set $\Tt^{(j)}$ in $\K[\X]$. As said in the introduction,
we will write $\Dr(V,<)$ to denote the collection of the triangular
sets $\{\Tt^{(1)},\dots,\Tt^{(s)}\}$. In ideal-theoretic terms, the
ideals $\langle \Tt^{(j)}\rangle$ are thus pairwise coprime, and their
intersection is the defining ideal $I$ of $V$, so that $\K[\X]/I\simeq 
R_{\Tt^{(1)}} \times \cdots \times R_{\Tt^{(s)}}$.


\medskip

The following proposition gives a cost estimate on the computation of
the equiprojectable decomposition, using a univariate representation
as input.
\begin{Prop}\label{lemma:equi}
  Let $V\subset \Kbar^n$ be a zero-dimensional algebraic set defined
  over $\K$, of degree $\delta$. If the characteristic of $\K$ is
  equal to $0$ or greater than $\delta^2$, given a univariate
  representation $\Ur$ of $V$, we can compute
  $\Dr(V,<)=\{\Tt^{(1)},\dots,\Tt^{(s)}\}$ in expected time
  $O(n\CC(\delta) (n+\log(\delta))).$ Besides, the following change of
  bases can be done in time $O(n\CC(\delta))$:
  \begin{itemize}
  \item given $A$ in $\K[X]/\langle P \rangle$, compute its images
    $(A_1,\dots,A_s)$ in $R_{\Tt^{(1)}} \times \cdots \times
    R_{\Tt^{(s)}}$;
  \item given $(A_1,\dots,A_s)$ in $R_{\Tt^{(1)}}\times\cdots\times
    R_{\Tt^{(s)}}$, compute their preimage $A$ in $\K[X]/\langle P
    \rangle$.
  \end{itemize}
\end{Prop}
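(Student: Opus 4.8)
The idea is to compute the equiprojectable decomposition by iterating the $\phi$-decomposition algorithm of Proposition~\ref{prop:dec}, following exactly the recursive definition of $\Dec(V,i,<)$ given above. We start from the univariate representation $\Ur=(P,\Uu,\mu)$ of $V$, which represents the whole set $V=\Dec(V,n,<)$. Then, for $i=n-1$ down to $1$, we have on hand univariate representations $(P_{i+1,k},\Uu_{i+1,k},\mu)_{k\le s_{i+1}}$ of the pieces of $\Dec(V,i+1,<)$, all sharing the same separating linear form $\mu$ (by Lemma~\ref{lemma:prob}, a linear form chosen with large enough coefficients separates all of $V$, hence each subset, with probability at least $1/2$, so one random draw works for the whole computation). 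To pass to level $i$, we apply Proposition~\ref{prop:dec} with the linear projection $\phi=\pi_i:(x_1,\dots,x_n)\mapsto(x_1,\dots,x_i)$ to each $V_{i+1,k}$ in turn, obtaining univariate representations of the pieces of $\Dec(V_{i+1,k},\pi_i)$; their union, over $k$, is a set of univariate representations of the pieces of $\Dec(V,i,<)$. After the last step $i=1$ we have univariate representations $(P_j,\Uu_j,\mu)$ of the equiprojectable components $V_1,\dots,V_s$, and we convert each to its triangular set $\Tt^{(j)}$ by the first direction of Lemma~\ref{lemma:conv}.

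\textbf{Cost analysis.} One run of Proposition~\ref{prop:dec} on a set of degree $d$ costs $O(\CC(d)(n+\log(d)))$. At each fixed level $i$, the pieces $V_{i+1,k}$ have degrees summing to $\delta$, so by super-linearity of $\CC$ the total cost of all calls at that level is $O(\CC(\delta)(n+\log(\delta)))$ (the $n+\log(\delta)$ factor does not depend on the individual degrees, and $\log(d)\le\log(\delta)$). There are $n-1$ levels, giving $O(n\CC(\delta)(n+\log(\delta)))$ for the whole decomposition phase. The final conversion from $s$ univariate representations of total degree $\delta$ to triangular sets costs, by Lemma~\ref{lemma:conv}, $O(n^2\CC(\delta))$ in total (again by super-linearity of $\CC$), which is absorbed into $O(n\CC(\delta)(n+\log(\delta)))$. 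The expected-time qualifier comes from the one random choice of $\mu$ in the initial univariate representation (or, if $\Ur$ is given, from the internal randomness of Proposition~\ref{prop:dec}, which only needs a separating element for the various projections $\pi_i(V_{i+1,k})$, each found in expected $O(1)$ trials). This proves the first claim.

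\textbf{Change of bases.} For the first change of basis, given $A$ in $\K[X]/\langle P\rangle$: since $P=P_1\cdots P_s$ with the $P_j$ pairwise coprime, we first reduce $A$ modulo each $P_j$ by fast simultaneous reduction, in time $O(\M(\delta)\log(\delta))=O(\CC(\delta))$, obtaining its image in $\prod_j \K[X]/\langle P_j\rangle$; then for each $j$ we apply the map $\Phi_{\Tt^{(j)},\Ur_j}$ of Lemma~\ref{lemma:conv} to land in $R_{\Tt^{(j)}}$, at a cost of $O(n\CC(\deg V_j))$ each, hence $O(n\CC(\delta))$ in total by super-linearity. For the reverse direction, given $(A_1,\dots,A_s)$, we first apply $\Psi_{\Tt^{(j)},\Ur_j}$ to each $A_j$ to get a residue in $\K[X]/\langle P_j\rangle$ (time $O(n\CC(\delta))$ total), then recombine by the Chinese Remainder Theorem modulo $P_1,\dots,P_s$ to recover $A$ modulo $P$, in time $O(\M(\delta)\log(\delta))=O(\CC(\delta))$.

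\textbf{Main obstacle.} The one point that requires care is the bookkeeping around the super-linearity of $\CC$: at each level of the recursion one must observe that the degrees of the pieces $V_{i+1,k}$ sum to $\delta$ (because the $V_{i+1,k}$ partition $V$), and that the $(n+\log(\delta))$ overhead factor in Proposition~\ref{prop:dec} is a per-call multiplicative factor not tied to the individual degrees, so that summing $\CC(d_k)(n+\log(\delta))$ over $k$ yields $\CC(\sum_k d_k)(n+\log(\delta))=\CC(\delta)(n+\log(\delta))$. Everything else is a direct composition of Proposition~\ref{prop:dec}, Lemma~\ref{lemma:conv}, and standard fast CRT/multi-reduction.
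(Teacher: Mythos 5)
Your proposal is correct and follows essentially the same route as the paper: iterate the $\phi$-decomposition of Proposition~\ref{prop:dec} along the projections $\pi_{n-1},\dots,\pi_1$ (keeping the same separating form $\mu$, so that $P=P_1\cdots P_s$), convert the resulting univariate representations to triangular sets via Lemma~\ref{lemma:conv}, and handle the change of bases by fast multiple reduction/CRT combined with the maps $\Phi_{\Tt^{(j)},\Ur_j}$ and $\Psi_{\Tt^{(j)},\Ur_j}$, with the same super-linearity bookkeeping for the cost. No substantive differences from the paper's argument.
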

\begin{proof}
  Let us write as before $\Dec(V,<)=\{V_{1},\dots,V_{s}\}$.  The
  algorithm to compute $\Dr(V,<)$ proceeds in two steps: first, we
  compute univariate representations of all $V_j$; secondly, we
  convert them into triangular sets. As we go, we also explain how to
  perform the change of basis from $A$ to $(A_1,\dots,A_s)$, and back.

  \paragraph{Step 1.} Recall the definition of the sequence
  $\Dec(V,i,<)$: we have $\Dec(V,n,<) = \{V\}$ and starting from
    $$\Dec(V,i+1,<)=\{V_{i+1,1},\dots,V_{i+1,s_{i+1}}\},$$
    we set
    $$\Dec(V,i,<) =\cup_{k \le s_{i+1}} \Dec(V_{i+1,k},\pi_i).$$ The
    first step of the algorithm follows the same loop, and computes
    univariate representations of all $V_{i,k}$. We set
    $\Ur_{n,1}=\Ur$, and for $i=n-1,\dots,1$, we let
    $\Ur_{i,1},\dots,\Ur_{i,s_i}$ be the univariate representations
    obtained by applying the algorithm of Proposition~\ref{prop:dec}
    to $\Ur_{i+1,1},\dots,\Ur_{i,s_{i+1}}$ and $\pi_i$. If
    $\delta_{i+1,k}$ denotes the degree of $V_{i+1,k}$, applying the
    algorithm of Proposition~\ref{prop:dec} to $\Ur_{i+1,k}$ and
    $\pi_i$ takes an expected time
    \[
    O(\CC(\delta_{i+1,k})(n+\log(\delta_{i+1,k})) ).
    \]
    Using the super-linearity of $\CC$, and the fact that
    $\delta_{i+1,1}+\cdots+\delta_{i+1,s_{i+1}} = \delta,$ the time
    spent at index $i$ is seen to be an expected
    $O(\CC(\delta)(n+\log(\delta))).$ Summing over all $i$, the total
    time is an expected
    \[
    O(n\CC(\delta)(n+\log(\delta))).
    \]

    Let $P$ be the characteristic polynomial of $\Ur$, and let
    $P_1,\dots,P_{s_1}$ be those of
    $\Ur_{1,1},\dots,\Ur_{1,s_{1}}$. Since the separating elements of
    $\Ur$ and $\Ur_{1,1},\dots,\Ur_{1,s_{1}}$ are the same, we have
    $P=P_1\dots P_{s_1}$. The change of basis $\K[X]/\langle P\rangle
    \to \K[X]/\langle P_1\rangle\times \cdots\times \K[X]/\langle
    P_{s_1}\rangle$ is done by multiple reduction, and the inverse
    conversion is done using the Chinese Remainder Theorem. Using the
    results of~\cite[Chapter~10]{GaGe03}, both conversions take time
    $O(\M(\delta)\log(\delta))$, which is $O(\CC(\delta))$.

    \paragraph{Step 2.} Starting from $\Ur_{1,1},\dots,\Ur_{1,s_{1}}$,
    we now compute the corresponding triangular sets
    $\Tt^{(1)},\dots,\Tt^{(s)}$. This is done by applying
    Lemma~\ref{lemma:conv}, which shows that we can compute each
    triangular set $\Tt^{(j)}$ in expected time $O(n^2\CC(\delta_j))$,
    where $\delta_j$ is the degree of $V_j$. Summing over all $j$ and
    using the super-linearity of the function $\CC$ gives a total
    expected time of $O(n^2\CC(\delta))$.

    Using the notation of Subsection~\ref{ssec:uni}, the conversion
    $$\K[X]/\langle P_1\rangle\times \cdots\times \K[X]/\langle
    P_{s_1}\rangle \to R_{\Tt^{(1)}} \times \cdots \times
    R_{\Tt^{(s)}}$$ and its inverse are done by applying
    $$(\Phi_{\Tt^{(1)},\Ur_{1,1}},\dots,\Phi_{\Tt^{(s_1)},\Ur_{1,{s_1}}})
    \quad\text{and}\quad
    (\Psi_{\Tt^{(1)},\Ur_{1,1}},\dots,\Psi_{\Tt^{(s_1)},\Ur_{1,{s_1}}}).$$
    By Lemma~\ref{lemma:conv}, and using the super-linearity of
    $\CC$, each conversion takes time $O(n\CC(\delta))$.
  \end{proof}


  \subsection{Solving question ${\bf P}_1$}\label{ssec:P1}

  We can now show how to solve question ${\bf P}_1$ stated in the
  introduction.  Given triangular sets $\Tt^{(1)},\dots,\Tt^{(\ell)}$
  and $\Ss^{(1)},\dots,\Ss^{(r)}$ for an order $<$, and a target order
  $<'$, we want to compute $\Dr (V, <'),$ with
$$V=
V(\Tt^{(1)}) \cup \cdots \cup V(\Tt^{(\ell)}) - V(\Ss^{(1)}) - \cdots
- V(\Ss^{(r)}).$$ We let $\delta$ be the sum of the degrees of
$\Tt^{(1)},\dots,\Tt^{(\ell)}$ and $\Ss^{(1)},\dots,\Ss^{(r)}$ and we
make the assumption that the characteristic of $\K$ is equal to $0$ or
greater than $\delta^2$.

Our strategy is to reduce to univariate representations, perform the
set theoretic operations on univariate polynomials, and finally
compute the equiprojectable decomposition for the new order.

\paragraph{Step 1.} We compute univariate representations
$\Ur_1,\dots,\Ur_\ell$ and $\Vr_1,\dots,\Vr_r$ of respectively
$V(\Tt^{(1)}),\dots,V(\Tt^{(\ell)})$ and
$V(\Ss^{(1)}),\dots,V(\Ss^{(r)})$. By Lemma~\ref{lemma:conv}, this can
be done in expected time
$$O\big (n^2(\CC(\delta_1) +\cdots + \CC(\delta_\ell)+ \CC(\delta'_1)
+\cdots + \CC(\delta'_r))\big ),$$ where $\delta_i$ is the degree of
$\Tt^{(i)}$ and $\delta'_i$ is the degree of $\Ss^{(i)}$.  Using the
super-linearity of $\CC$, this is seen to be an expected $O(n^2
\CC(\delta))$.

\paragraph{Step 2.} We compute univariate representations $\Ur$ of
$V(\Tt^{(1)}) \cup \dots \cup V(\Tt^{(\ell)})$ and $\Vr$ of
$V(\Ss^{(1)}) \cup \dots \cup V(\Ss^{(r)})$. The following
divide-and-conquer process takes an expected time
$O(n\CC(\delta)\log(\delta))$ to achieve this task.

We apply repeatedly the union algorithm of Lemma~\ref{lemma:merge} to
$\Ur_1,\dots,\Ur_\ell$, respectively $\Vr_1,\dots,\Vr_r$. To compute
say $\Ur$, we let $\ell'=\lceil \ell/2\rceil$, and we compute
recursively univariate representations of
$$V(\Tt^{(1)}) \cup \dots \cup V(\Tt^{(\ell')}) \quad\text{and}\quad
V(\Tt^{(\ell'+1)}) \cup \dots \cup V(\Tt^{(\ell)});$$ then, these two
univariate representations are merged by means of
Lemma~\ref{lemma:merge}. The running time analysis is the same as in
the proof of Proposition~\ref{prop:dec}: the divide-and-conquer
structure of the algorithm induces the loss of a logarithmic factor,
as is the case for other algorithms with the same
structure~\cite[Chapter~10]{GaGe03}.

\paragraph{Step 3.} By another application of Lemma~\ref{lemma:merge}
to $\Ur$ and $\Vr$, this time for computing a set-theoretic
difference, we finally obtain a univariate representation $\Wr$ of
$V$. This takes an expected time $O(n\CC(\delta))$.

\paragraph{Step 4.} Starting from $\Wr$, we compute $\Dr(V,<')$ using
the algorithm of Proposition~\ref{lemma:equi}.  This takes an expected
time $O(n \CC(\delta)(n+\log(\delta))).$

\medskip

The total cost of this algorithm is an expected $O(n
\CC(\delta)(n+\log(\delta)))$, as claimed in Theorem~\ref{theo:CtoE}.


\subsection{Solving question ${\bf P}_2$}\label{ssec:P2}

Next, we show how to solve question ${\bf P}_2$ stated in the
introduction. Given a triangular set $\Tt$ in $\K[X_1,\dots,X_n]$, for
a variable order $<$, as well as $F$ in $R_\Tt$ and a target variable
order $<'$, we are to compute the equiprojectable decompositions
$$\Dr(V(\Tt)\cap V(F),<') \quad\text{and}\quad \Dr(V(\Tt)- V(F),<'),$$
as well as the inverse of $F$ modulo each $\Tt'$ in $\Dr(V(\Tt)-
V(F),<')$. We let $\delta$ be the degrees of $\Tt$ and we make the
assumption that the characteristic of $\K$ is equal to $0$ or greater
than~$\delta^2$.

Our strategy is similar to the one of the previous subsection: we
convert to a univariate representation, operate with univariate
polynomials, and convert back to triangular representations.

\paragraph{Step 1.} We compute a univariate representation
$\Ur=(P,\Uu,\mu)$ of $V(\Tt)$ and $F^\star=\Psi_{\Tt,\Ur}(F)$. By
Lemma~\ref{lemma:conv}, this can be done in expected time $O(n^2
\CC(\delta))$.

\paragraph{Step 2.} We compute $P'=\gcd(P,F^\star)$ and $P''=P/P'$,
as well as the inverse $G^\star$ of $F^\star$ modulo $P''$ (this
inverse exists, since $P$ is squarefree). This takes time
$O(\M(\delta)\log(\delta))$, which is $O(\CC(\delta))$.

The roots of $P'$ describe the points of $V(\Tt)$ where $F$ vanishes;
the roots of $P''$ describe those where $F$ is nonzero.

\paragraph{Step 3.} Writing $\Uu=(U_1,\dots,U_n)$, we compute $U'_i =
U_i \bmod P'$ and $U''_i=U_i \bmod P''$ for all $i$, and we define
$\Ur'=(P',(U'_1,\dots,U'_n),\mu)$ and
$\Ur''=(P'',(U''_1,\dots,U''_n),\mu)$. This takes time
$O(n\M(\delta))$, which is negligible compared to the cost of Step~1.

Note that $\Ur'$ is a univariate representation of $V(\Tt)\cap V(F)$,
and that $\Ur''$ is a univariate representation of $V(\Tt)- V(F)$.

\paragraph{Step 4.} Starting from $\Ur'$ and $\Ur''$ we compute
the equiprojectable decompositions $\Dr(V(\Tt)\cap V(F),<')$ and
$\Dr(V(\Tt)- V(F),<')$ using the algorithm of
Proposition~\ref{lemma:equi}.  This takes an expected time
$O(n\CC(\delta)(n+\log(\delta))).$ Besides, using the second part of
Proposition~\ref{lemma:equi}, we can compute the image of $G^\star$ in
each $R_{\Tt'}$, for $\Tt'$ in $\Dr(V(\Tt)- V(F),<')$. This image is
the inverse of $F$ in $R_{\Tt'}$.

\medskip

As for question ${\bf P}_2$, the total cost of this algorithm is an
expected $O(n\CC(\delta)(n+\log(\delta)))$, as claimed in
Theorem~\ref{theo:CtoE}.


\subsection{Experimental results}

This section reports on experimental results obtained with a Maple
implementation of the algorithms of Subsection~\ref{ssec:P1}
and~\ref{ssec:P2}.

Our implementation supports inputs with coefficients in finite fields
of the form $\F_p$, $p$ prime. This is the most natural choice, since
over base fields such as $\Q$ or rational function fields, the cost of
arithmetic operations in the base field cannot be assumed to be
constant. For inputs defined over e.g. $\Q$, the natural approach
would be to use modular methods, using for instance lifting techniques
(for which the equiprojectable decomposition is particularly well
suited, as we pointed out in the introduction).

Over base fields such as $\F_p$, we have two choices for modular
composition and power projection: algorithms following Brent and
Kung's idea, as described in Section~\ref{ssec:basics}, or the
extension of the Kedlaya-Umans algorithm given in~\cite{PoSc10}.
Unfortunately, even though the latter is asymptotically better, the
large constants hidden in the $\Ot$ notation make it inferior for the
range of degrees we consider. Thus, our implementation relies on the
Brent-Kung approach.

Other than modular composition and power projection, our algorithms
use only univariate and bivariate polynomial arithmetic. As a result,
they were implemented using the {\tt modp1} functions, which provide
fast implementations of arithmetic operations in $\F_p[X]$, for $p$ a
word-size prime.

The following timings are obtained using Maple 15 on an 2.8 GHz AMD
Athlon II X2 240e processor. The base field is $\F_p$, with
$p=962592769$. All timings are in seconds, and all computations were
interrupted whenever they used 2Gb of RAM or more.

Our first experiments concern the particular case of question ${\bf P}_1$,
where the input and the target order are the same, and $r=0$. In other
words, we take as input some triangular sets
$\Tt^{(1)},\dots,\Tt^{(\ell)}$ for an order $<$, and we compute the
equiprojectable decomposition of
$$V(\Tt^{(1)}) \cup \cdots \cup V(\Tt^{(\ell)}),$$ for the same order.
In Table~\ref{table1}, we shows comparisons with the function {\tt
  Equi\-projectableDe\-com\-position} of the {\tt RegularChains}
library~\cite{LeMoXi05}, which has similar specifications
(we are not aware of other implementations of such an algorithm).

In each sub-table, the number $n$ of variables is fixed; we show
timings for the equiprojectable decompositions of sets of points of
cardinality $\delta$; the column $d$ gives an upper bound on all $d_i$
that appear as main degrees in the triangular sets in the output.  In
almost all cases, our implementation does better than the built-in
function; the fact that we are relying on the {\tt modp1} functions is
certainly a key factor for this.

\begin{table}
  \caption{Timings for equiprojectable decomposition}\label{table1}
  \begin{center}
  \begin{tabular}{cc}
  \begin{tabular}{c|c|c||c|c}
    $n$ & $d$ & $\delta$ & us & Maple \\ \hline
    3& 2& 4&  0.03 & 0.03 \\
    3& 3& 10& {0.07}& 0.12\\
    3& 4& 20& {0.12} & 0.52 \\
    3& 5& 35& {0.22} &1.6 \\
    3& 6& 56& {0.44} & 4.2
  \end{tabular}
&
  \begin{tabular}{c|c|c||c|c}
    $n$ & $d$ & $\delta$ & us & Maple \\ \hline
4& 2& 5 & 0.06 & {0.05} \\
4& 3& 15 & {0.2} & 0.4\\
4& 4& 35 & {0.3} & 2.1\\
4& 5& 70 & {0.8} & 8.4\\
4& 6& 126 & {1.9} & 40
  \end{tabular} 
\\
\\
  \begin{tabular}{c|c|c||c|c}
    $n$ & $d$ & $\delta$ & us & Maple \\ \hline
5& 2& 6 & 0.09 & {0.08}\\
5& 3& 21 & {0.37} & 0.96\\
5& 4& 56 & {0.81} & 6.5\\
5& 5& 126& {2.4} & 45 \\
5& 6& 252& {9.5} & 512
  \end{tabular}
&
  \begin{tabular}{c|c|c||c|c}
    $n$ & $d$ & $\delta$ & us & Maple \\ \hline
6& 2& 7 & 0.15 & {0.13} \\
6& 3& 28& {0.5} & 2.1\\
6& 4& 84& {1.8} &19\\
6& 5& 210& {8.2} & 300\\
6& 6& 462& {49} & 5885
  \end{tabular} 
  \end{tabular}
  \end{center}
\end{table}

Our second experiments address inverse computation modulo a triangular
set, which is a particular case of question ${\bf P}_2$: the input and
the target order are the same, and (by construction of our examples),
no splitting occurred. In other words, we take as input a triangular
set $\Tt$ and $F\in R_\Tt$, invertible in $R_\Tt$; we output the
inverse of $F$ in $R_\Tt$.

In Table~\ref{table2}, we give examples for various situations: 
$n$ denotes the number of variables and $d$ is such that 
the input triangular set has multidegree $(d,\dots,d)$, of length $n$;
thus, its degree $\delta$ is $d^n$. 

We show comparisons with the function {\tt Inverse} of the {\tt
  RegularChains} library. This function may induce splittings; if we
wanted the same output as in our implementation, we would also have to
perform a recombination after the call to {\tt Inverse} (we did not
include this step in the timings). As in the previous example, our
code usually does better.

We also include timings obtained by using the C {\tt modpn}
library~\cite{LiMoRaSc09}, which can be called from a Maple
session. Obviously, we expect this compiled library to be much faster
than our interpreted code; however, timings are sometimes within a
factor of 10 or less, which we see as a sign that our implementation
performs well. Note that {\tt modpn} relies on FFT techniques, as a
result, only those finite fields $\F_p$ with suitable roots of unity
are supported (the field $\F_p$ in our examples is one of them).

\begin{table}
  \caption{Timings for inversion in $R_\Tt$}\label{table2}
  \begin{center}
  \begin{tabular}{cc}
  \begin{tabular}{c|c|c||c|c|c}
    $n$ & $d$ & $\delta$ & us & {\tt Inverse} & {\tt modpn} \\ \hline
3& 2& 8 & 0.04 & 0.3 & 0.01\\
3& 3& 27 & 0.06 & 1.4 & 0.01\\
3& 4& 64 & 0.14 & 5.2 & 0.02\\
3& 5& 125& 0.24& 6.1 & 0.05\\
3& 6& 216& 0.75& 21 & 0.06
  \end{tabular}
&
  \begin{tabular}{c|c|c||c|c|c}
    $n$ & $d$ & $\delta$ & us & {\tt Inverse} & {\tt modpn} \\ \hline
4& 2& 16& 0.07& 1.1& 0.01\\
4& 3& 81& 0.2 & 4.8 &0.06\\
4& 4& 256& 1& 600 & 0.1\\
4& 5& 625& 5.3& 10536& 0.8\\
4& 6& 1296& 23& $> 2$ Gb& 1.2
  \end{tabular} 
\\
\\
  \begin{tabular}{c|c|c||c|c|c}
    $n$ & $d$ & $\delta$ & us & {\tt Inverse} & {\tt modpn} \\ \hline
5& 2& 32  & 0.14 & 210 & 0.03\\
5& 3& 243  &1 & 1576 & 0.42\\
5& 4& 1024 &1.5& $> 2$ Gb &1.2\\
5& 5& 3125 &151 & $> 2$  Gb & 24\\
5& 6& 7776 &1007 & $> 2$  Gb & 37
  \end{tabular}
&
  \begin{tabular}{c|c|c||c|c|c}
    $n$ & $d$ & $\delta$ & us & {\tt Inverse} & {\tt modpn} \\ \hline
6& 2& 64 & 0.3 & $> 2$ Gb & 0.1\\
6& 3& 729 & 8.8 & $> 2$ Gb & 4.6\\
6& 4& 4096 & 273 & $> 2$ Gb & 18\\
6& 5& 15625 & 5099 & $> 2$ Gb &661\\
6& 6& 46656 & 67339 & $> 2$ Gb& 1135
  \end{tabular} 
  \end{tabular}
  \end{center}
\end{table}


\section{The converse reduction}\label{sec:EtoC}

This section is mostly independent from the other ones. In the
previous sections, we used modular composition and power projection as
our basic subroutines, and reduced other questions to these two
operations. In this section, we will do the opposite, by reducing
modular composition and power projection to equiprojectable
decomposition.


As mentioned in the introduction, modular composition and power
projection are dual problems. An algorithmic theorem called the {\em
  transposition principle} shows that an algorithm for the former can
be transformed into an algorithm for the latter, and
conversely~\cite{BuClSh97,BoLeSc03}: this result could in principle
allow us to deal only with e.g. modular composition. However, it
applies only in a restricted computational model (using {\em linear
  programs}), which is not suited to questions such as decompositions
of triangular sets (which are inherently non-linear). As a result, we
give explicit reductions for both modular composition and power
projection.

In the introduction, we defined $\EE: \N^2 \to \N$ as a function such
that one can solve problem ${\bf P}_1$ (computing the equiprojectable
decomposition of a family of triangular sets in $n$ variables, with
sum of degrees $\delta$) using $\EE(n,\delta)$ base field operations.

Recall then the statement of Theorem~\ref{theo:EtoC}: we take
$(m,n)=(1,1)$ or $(m,n)=(1,2)$, and we let $\Tt$ be a triangular set
in $n$ variables that generates a radical ideal. Then, we can compute
modular compositions and power projections modulo $\langle \Tt
\rangle$ with parameters $(m,n)$ and size $\delta_{\bf f} \le
\delta_\Tt$ in time $2\EE(4,\delta_\Tt)+\Ot(\delta_\Tt)$.

The two subsections address respectively modular composition and power
projection. In both cases, we can assume that $n=2$, since any
triangular set in one variable (that is, any polynomial $T_1(X_1)$)
can be seen as a triangular set in two variables, by adding a dummy
polynomial $T_2(X_1,X_2)=X_2$. Note that the proofs would generalize
to computations in more than two variables, and would involve terms of
the form $\EE(n+2,\delta_\Tt)$.


\subsection{Modular composition}


Following the previous discussion let thus $\Tt=(T_1,T_2)$ be a
triangular set in $\K[X_1,X_2]$, $G$ in $R_\Tt$, and $F$ in $\K[Y]$,
of degree $\deg(F) \le \delta_\Tt$. We show here how to compute
$K=F(G) \in R_\Tt$, using change of order as our main subroutine.

Consider the triangular set (for the order $X_1 < X_2< Y$)
\[
\Tt' \left |
  \begin{array}{l}
    Y-G(X_1,X_2)\\
    T_2(X_1,X_2)\\
    T_1(X_1);
  \end{array}\right .
\]
let $V\subset \Kbar^3$ be its zero-set, and let us compute $\Dr(V,
<')$, where $<'$ is the order $Y<'X_1<'X_2$. We obtain a family of
triangular sets $\Uu^{(1)},\dots,\Uu^{(N)}$ of the form
\[
\Uu^{(i)} \left |
  \begin{array}{l}
    U_{i,2}(Y,X_1,X_2)\\
    U_{i,1}(Y,X_1)\\
    R_i(Y).
  \end{array}\right .
\]
Let now $I$ be the ideal generated by the polynomials (which do not
form a triangular set, since the first polynomial is not reduced)
\[
\left |
  \begin{array}{l}
    Z-F(Y)\\
    Y-G(X_1,X_2)\\
    T_2(X_1,X_2)\\
    T_1(X_1).
  \end{array}\right .
\]
After reduction, we see that $I$ is generated by the triangular set
(for the order $X_1 <X_2 < Y <Z$)
\[
\Tt'' \left |
  \begin{array}{l}
    Z-K(X_1,X_2)\\
    Y-G(X_1,X_2)\\
    T_2(X_1,X_2)\\
    T_1(X_1),
  \end{array}\right .
\]
where $K$ is the polynomial we want to compute. On the other hand, the
construction of the triangular sets $\Uu^{(i)}$ shows that $I$ is the
intersection of the ideals
generated by the triangular sets $\Vv^{(i)}$ (for the order $Y <' X_1
<' X_2 <' Z$) given by
\[
\Vv^{(i)} \left |
  \begin{array}{l}
    Z-F_i(Y)\\
    U_{i,2}(Y,X_1,X_2)\\
    U_{i,1}(Y,X_1)\\
    R_i(Y),
  \end{array}\right .
\]
with $F_i = F \bmod R_i$. The algorithm is then the following:
\begin{itemize}
\item First, we compute all triangular sets $\Uu^{(i)}$. Since $\Tt'$
  generates a radical ideal, this can be done in $\EE(3,\delta_\Tt)\le
  \EE(4,\delta_\Tt)$ base field operations (obviously, $\EE(n,\delta)
  \le \EE(n',\delta)$ holds for all $n\le n'$, as can be seen by using
  $n'-n$ dummy polynomials to obtain a triangular set in $n'$
  variables).
\item Next, we compute all triangular sets $\Vv^{(i)}$. This requires
  us to compute all $F_i$. Since $\deg(F) \le \delta_\Tt$, and since
  the sum of the degrees of the $R_i$ is at most $\delta_\Tt$ as well,
  all $F_i$ can be computed in time
  $O(\M(\delta_\Tt)\log(\delta_\Tt))$ using fast multiple
  reduction~\cite[Chapter~10]{GaGe03}.
\item Finally, we compute $\Tt''$, and thus $K$, by computing the
  equiprojectable decomposition of $V(\Vv^{(1)}) \cup \cdots \cup
  V(\Vv^{(N)})$, for the order $X_1 <X_2 < Y <Z$. Again, this takes
  time $\EE(4,\delta_\Tt)$.
\end{itemize}
The total time is at most
$2\EE(4,\delta_\Tt)+O(\M(\delta_\Tt)\log(\delta_\Tt))$, which fits
into the claimed bound.


\subsection{Power projection}

We will now prove the second part of Theorem~\ref{theo:EtoC}, dealing
with power projection. Let thus $\Tt=(T_1,T_2)$ be a triangular set in
$\K[X_1,X_2]$ that generates a radical ideal, let $G$ be in $R_\Tt$,
and let $\ell:R_\Tt \to \K$ be a $\K$-linear form. Given an integer $f
\le \delta_\Tt$, we show here how to compute the values $\ell(G^c)$,
for $0 \le c< f$. We start with a folklore lemma involving univariate
computations only.

\paragraph{Univariate computations.}
Let $\A$ be a ring, $F$ a monic polynomial of degree $d$ in $\A[X]$,
and $R$ the free $\A$-module $\A[X]/\langle F \rangle$, with the
(classes of) $1,X,\dots,X^{d-1}$ as a basis. In this context, the {\em
  trace} $\tau: R \to \A$ is still well-defined, with $\tau(A)$ being
the trace of the multiplication map by $A$ in~$R$.  For $A\in R$ and
$\ell$ an $\A$-linear form $R \to \A$, the $\A$-linear form $A\cdot
\ell$ is defined as before, by $(A \cdot \ell)(B)=\ell(AB)$.

\begin{Lemma}\label{lemma:trdeg}
  Suppose that the derivative $\partial F/\partial X$ of $F$ is
  invertible in $R$, with inverse $G$. Given $G$, and given an
  $\A$-linear form $\ell: R \to \A$, we can compute $A$ in $R$ such
  that $\ell=A \cdot \tau$, using $O(\M(d))$ operations in $\A$.
\end{Lemma}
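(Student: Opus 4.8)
Write $F=X^d+c_{d-1}X^{d-1}+\cdots+c_0$, and let $\rho\colon R\to\A$ be the $\A$-linear form sending $\sum_{i<d}a_iX^i$ to its top coefficient $a_{d-1}$ (so, in the convention of Subsection~\ref{ssec:basics}, $\ell$ is given to us by the values $\ell(1),\ell(X),\dots,\ell(X^{d-1})$). Everything rests on the classical identity
\[
\tau(h)\;=\;\rho\!\left(h\cdot\frac{\partial F}{\partial X}\right)\qquad\text{for all }h\in R,
\]
i.e. $\tau=\frac{\partial F}{\partial X}\cdot\rho$ in the transposed-multiplication notation. I would justify this either by checking it on the monomial basis $1,X,\dots,X^{d-1}$ (via Newton's identities), or, since both sides are universal polynomials in the coefficients of $F$ and of $h$, by specializing to the generic split case $F=\prod_j(X-\alpha_j)$ with the $F'(\alpha_j)$ invertible, where it is Euler's identity $\sum_j \alpha_j^{k}/F'(\alpha_j)=[k=d-1]$ for $0\le k\le d-1$. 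Note this holds for \emph{any} monic $F$, with no invertibility hypothesis; the hypothesis enters only through $G$.

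Granting this, the task splits in two. Abbreviating $F'=\partial F/\partial X$, the equation $\ell=A\cdot\tau$ reads $\ell(h)=\tau(Ah)=\rho(AhF')$ for all $h$; setting $B:=AF'\bmod F$, this is equivalent to ``$\ell=B\cdot\rho$ and $A=BG\bmod F$'' (recall $G=(F')^{-1}$ is given). So the plan is: \emph{(i)} recover from $\ell$ the element $B\in R$ with $\ell=B\cdot\rho$; \emph{(ii)} output $A=BG\bmod F$. For correctness, $A\cdot\tau=A\cdot(F'\cdot\rho)=(AF')\cdot\rho=(BGF')\cdot\rho=B\cdot\rho=\ell$, using $GF'\equiv 1\bmod F$ and associativity of transposed multiplication.

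For step (i): writing $F^{\mathrm{rev}}(t)=t^dF(1/t)=1+c_{d-1}t+\cdots+c_0t^d$, the recurrence defining $X^m\bmod F$ shows that the sequence $(\rho(X^m))_{m\ge0}$ has generating series $t^{d-1}/F^{\mathrm{rev}}(t)$. Plugging this into $\ell(X^k)=\sum_{j<d}b_j\,\rho(X^{j+k})$ (where $B=\sum_{j<d}b_jX^j$), a short manipulation with the reversed polynomial $B^{\mathrm{rev}}(t)=\sum_{j<d}b_jt^{d-1-j}$ gives
\[
\sum_{k<d}\ell(X^k)\,t^k\;\equiv\;\frac{B^{\mathrm{rev}}(t)}{F^{\mathrm{rev}}(t)}\pmod{t^d},
\quad\text{hence}\quad
B^{\mathrm{rev}}(t)\equiv\Big(\textstyle\sum_{k<d}\ell(X^k)t^k\Big)F^{\mathrm{rev}}(t)\pmod{t^d}.
\]
Since $\deg B^{\mathrm{rev}}<d$, one truncated product of polynomials of degree $<d$ recovers $B^{\mathrm{rev}}$, hence $B$ by reversing the coefficient sequence; this costs $O(\M(d))$. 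Step (ii) is one multiplication in $R$, again $O(\M(d))$. Total: $O(\M(d))$ operations in $\A$, as claimed.

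The only genuinely delicate point is the trace identity over an arbitrary base ring $\A$, where $F$ need not split — handled by the direct basis check or the universal-specialization argument above; all the rest is routine bookkeeping with truncations and polynomial reversals.
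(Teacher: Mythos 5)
Your argument is correct and follows essentially the same route as the paper: introduce the residue form $\rho$, recover $B$ with $\ell=B\cdot\rho$ by one truncated product against ${\rm rev}(F,d)$, invoke the Euler formula $\tau=\frac{\partial F}{\partial X}\cdot\rho$, and output $A=BG\bmod F$, all in $O(\M(d))$ operations. The only difference is cosmetic: the paper cites the Euler formula from Demazure, whereas you sketch a proof of it (basis check or specialization to the generic split case), which is a perfectly valid substitute.
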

\begin{proof}
  Let us define another useful $\A$-linear form, the {\em residue}
  $\rho: R \to \A$, by $\rho(X^i)=0$ for $i < d-1$ and
  $\rho(X^{d-1})=1$. Given $\ell$ as above, it is known that there
  exists $B$ such that $\ell=B \cdot \rho$. Indeed, a straightforward
  computation shows that the values $(B \cdot \rho)(X^i)$, for
  $i=0,\dots,d-1$, are the coefficients of ${\rm rev}(B,d-1)/{\rm
    rev}(F,d) \bmod X^{d}$, where for any polynomial $P\in \A[X]$ and
  any $d \ge \deg(P)$, we write ${\rm rev}(P,d)=X^d P(1/X)$. This
  implies that given $\ell$, we can find the requested $B$ by means of
  a power series multiplication modulo $X^d$, which can be done in
  $\M(d)$ operations in $\A$.
  
  Furthermore, the {\em Euler formula}~\cite[Proposition
    2.4]{Demazure87} shows that $\tau = \partial F/\partial X \cdot
  \rho$, so that $\rho = G \cdot \tau$. With $\ell$ and $B$ as above,
  this implies that we have $\ell = A \cdot \tau$, with $A=B G \bmod
  F$. Computing $A$ thus takes another $O(\M(d))$ operations in $\A$,
  proving the lemma.
\end{proof}

\paragraph{Bivariate computations.} We will now apply the results 
of the former paragraph in a bivariate context. The notation is the
one introduced at the beginning of this subsection; furthermore, we
let ${\rm tr}:R_\Tt \to \K$ be the trace linear form. We also write
$d_1=\deg(T_1,X_1)$ and $d_2=\deg(T_2,X_2)$, so that $\delta_\Tt=d_1
d_2$.

\begin{Lemma}\label{lemma:trdeg2}
  Given a $\K$-linear form $\ell:R_\Tt \to \K$, one can compute an
  element $A \in R_\Tt$ such that $\ell = A\cdot {\rm tr}$ in time
  $O(\M(d_1)\M(d_2)\log(d_1)\log(d_2))$.
\end{Lemma}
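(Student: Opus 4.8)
The plan is to view $R_\Tt$ as a two-step tower of univariate quotients and to apply Lemma~\ref{lemma:trdeg} once at each step, gluing the two applications together by a trace--adjunction. Set $A_1=\K[X_1]/\langle T_1\rangle$, so that $R_\Tt=A_1[X_2]/\langle T_2\rangle$ is a free $A_1$-module of rank $d_2$ with basis $1,X_2,\dots,X_2^{d_2-1}$, while $A_1$ is free of rank $d_1$ over $\K$. Write $\tau_1={\rm tr}_{R_\Tt/A_1}\colon R_\Tt\to A_1$ and $\tau_0={\rm tr}_{A_1/\K}\colon A_1\to\K$; by transitivity of the trace, ${\rm tr}=\tau_0\circ\tau_1$. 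Since $\langle\Tt\rangle$ is radical, $A_1$ embeds into the reduced ring $R_\Tt$, hence is itself reduced, so $T_1$ is squarefree and $\partial T_1/\partial X_1$ is a unit in $A_1$; likewise $T_2\bmod\mathfrak m$ is squarefree over every residue field $A_1/\mathfrak m$, so $\partial T_2/\partial X_2$ is a unit in $R_\Tt$. Thus the hypotheses of Lemma~\ref{lemma:trdeg} hold both for $\K\subset A_1$ (polynomial $T_1$) and for $A_1\subset R_\Tt$ (polynomial $T_2$).

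First I would compute the inverses $G_1=(\partial T_1/\partial X_1)^{-1}$ in $A_1$, by one extended gcd in $\K[X_1]$ in time $O(\M(d_1)\log(d_1))$, and $G_2=(\partial T_2/\partial X_2)^{-1}$ in $R_\Tt$. Next, from $\ell$ I would build the $A_1$-linear form $\hat\ell\colon R_\Tt\to A_1$ by letting $\hat\ell(X_2^j)\in A_1$ be the unique element with $\tau_0\!\bigl(a\,\hat\ell(X_2^j)\bigr)=\ell(aX_2^j)$ for all $a\in A_1$; this is legitimate because $\mathrm{Hom}_\K(A_1,\K)=A_1\cdot\tau_0$ (nondegeneracy of the trace form of the \'etale algebra $A_1/\K$), and each $\hat\ell(X_2^j)$ is produced by applying the univariate case of Lemma~\ref{lemma:trdeg} --- base ring $\K$, polynomial $T_1$, inverse $G_1$ --- to the $\K$-linear form $a\mapsto\ell(aX_2^j)$, i.e.\ by one power-series product modulo $X_1^{d_1}$ and one multiplication by $G_1$ modulo $T_1$; the total over all $j$ is $O(d_2\,\M(d_1))$. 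Then I would apply Lemma~\ref{lemma:trdeg} with base ring $A_1$, polynomial $T_2$ and inverse $G_2$ to $\hat\ell$, obtaining $A\in R_\Tt$ with $\hat\ell=A\cdot\tau_1$ using $O(\M(d_2))$ operations in $A_1$, hence $O(\M(d_1)\M(d_2))$ operations in $\K$. Finally, for every $B\in R_\Tt$,
\[
\ell(B)=\ell(1\cdot B)=\tau_0\!\bigl(\hat\ell(B)\bigr)=\tau_0\!\bigl(\tau_1(AB)\bigr)={\rm tr}(AB),
\]
so $\ell=A\cdot{\rm tr}$, which is exactly the claim (and $A$ is unique, since ${\rm tr}$ is nondegenerate).

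The step I expect to be the main obstacle is the computation of $G_2=(\partial T_2/\partial X_2)^{-1}$ in $R_\Tt=A_1[X_2]/\langle T_2\rangle$ within the stated budget. Its $X_2$-leading coefficient is $d_2$, a unit by the hypothesis on $\mathrm{char}(\K)$, and it is coprime to $T_2$ modulo every maximal ideal of $A_1$; the subtlety is that a Euclidean-type computation over the non-field ring $A_1$ may run into leading coefficients that are zero divisors, so the divide-and-conquer extended gcd has to be run with care, reverting to an inversion in $A_1$ --- costing $O(\M(d_1)\log(d_1))$ each --- whenever a leading coefficient needs to be normalized. Tracking this cost through the two levels of the tower is what produces the $\log(d_1)\log(d_2)$ factor in the bound $O(\M(d_1)\M(d_2)\log(d_1)\log(d_2))$; every other step above fits in $O(\M(d_1)\M(d_2))$ or less, so this dominates.
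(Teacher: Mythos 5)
Your overall argument is the same as the paper's: you write $R_\Tt$ as the tower $\K \subset A_1=\K[X_1]/\langle T_1\rangle \subset R_\Tt=A_1[X_2]/\langle T_2\rangle$, factor $\ell=\tau_0\circ\hat\ell$ with $\hat\ell$ an $A_1$-linear form (your $\hat\ell$ is exactly the paper's $L$, obtained by $d_2$ applications of Lemma~\ref{lemma:trdeg} over $\K$), then apply Lemma~\ref{lemma:trdeg} once more over $A_1$ to write $\hat\ell=A\cdot\tau_1$, and conclude by transitivity of the trace; the costs of these steps match the paper's, and your justification that the two derivatives are units (radicality plus perfectness of $\K$) is a detail the paper leaves implicit.

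The genuine gap is exactly the step you flag: computing $G_2=(\partial T_2/\partial X_2)^{-1}$ in $R_\Tt$, which is the dominant cost and the sole source of the $\log(d_1)\log(d_2)$ factor. The paper does not do this by an ad hoc Euclidean computation; it invokes a dedicated quasi-linear inversion algorithm from~\cite{AcCoMa03} with cost $O(\M(d_1)\M(d_2)\log(d_1)\log(d_2))$. Your substitute --- a divide-and-conquer extended gcd over $A_1$, ``reverting to an inversion in $A_1$ whenever a leading coefficient needs to be normalized'' --- fails as stated: $A_1$ is in general a product of fields, and an intermediate leading coefficient in the remainder sequence can vanish modulo some irreducible factors of $T_1$ but not others, i.e.\ it is a zero divisor and admits no inverse in $A_1$ at all. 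This can happen even though $\partial T_2/\partial X_2$ itself is a unit in $R_\Tt$, because the Euclidean degree sequences may differ across the maximal ideals of $A_1$. Coping with such degree drops requires splitting $T_1$ (dynamic-evaluation style) and recombining by the Chinese Remainder Theorem, and proving that this can be organized within a quasi-linear bound is a nontrivial result in its own right --- which is precisely why the paper cites it instead of proving it. As a consequence, your accounting of where the $\log(d_1)\log(d_2)$ factor comes from is not established; to complete the proof you should either cite such an inversion algorithm, as the paper does, or actually design and analyze the splitting/recombination procedure you allude to.
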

\begin{proof}
  Let us define $S_\Tt = \K[X_1]/\langle T_1\rangle$, so that we have
  $R_\Tt =S_\Tt[X_2]/\langle T_2\rangle$. Let further $\tau_1:S_\Tt\to
  \K$ and $\tau_2:R_\Tt \to S_\Tt$ be the trace forms; thus, $\tau_1$
  is $\K$-linear, $\tau_2$ is $S_\Tt$-linear, and we have ${\rm
    tr}=\tau_1 \circ \tau_2$.

  First, we are going to factor $\ell: R_\Tt \to \K$ as $\ell =
  \tau_1 \circ L$, where $L:R_\Tt \to S_\Tt$ is a suitable
  $S_\Tt$-linear form. Computing $L$ amounts to compute
  $\lambda_{i_2}=L(X_2^{i_2})$, for $i_2=0,\dots,d_2-1$; the condition
  defining $L$ is equivalent to $\ell(X_1^{i_1}
  X_2^{i_2})=\tau_1(L(X_1^{i_1} X_2^{i_2}))$, for $i_1=0,\dots,d_1-1$
  and $i_2=0,\dots,d_2-1$. This can be rewritten as $\ell(X_1^{i_1}
  X_2^{i_2})=\tau_1(X_1^{i_1}\lambda_{i_2})$, by $S_\Tt$-linearity
  of~$L$. For a fixed $i_2 < d_2$, let $\ell_{i_2}$ be the $\K$-linear
  form $S_\Tt \to \K$ defined by $\ell_{i_2}(A)=\ell(A X_2^{i_2})$.
  Then, the previous condition says that $\ell_{i_2} = \lambda_{i_2}
  \cdot \tau_1$.

  Computing the linear forms $\ell_{i_2}$ is free (since their values
  on the canonical basis of $S_\Tt$ are simply values of $\ell$);
  then, finding $\lambda_{i_2}$ is done by first inverting $T_1'$
  modulo $T_1$, and applying Lemma~\ref{lemma:trdeg} for the extension
  $S_\Tt \to \K$. The total time to computing all $\lambda_{i_2}$ is
  thus $O((\log(d_1)+d_2) \M(d_1))$.

  Now that we have written $\ell = \tau_1 \circ L$, we will apply
  Lemma~\ref{lemma:trdeg} to $L$, for the extension $R_\Tt \to S_\Tt$.
  This requires us to invert $\partial T_2/\partial X_2$ in $R_\Tt$; a
  quasi-linear time algorithm is given in~\cite{AcCoMa03}, with a cost
  $O(\M(d_1)\M(d_2) \log(d_1)\log(d_2))$. Once this is done,
  Lemma~\ref{lemma:trdeg} gives us an element $A \in R_\Tt$ such that
  $L=A\cdot \tau_2$ in time $O(\M(d_1)\M(d_2))$.

  To summarize, we have written $\ell = \tau_1 \circ L$ and $L=A\cdot
  \tau_2$, so that $\ell(B) = \tau_1 (\tau_2(AB))$ holds for all $B
  \in R_\Tt$. Since $\tau_1 \circ \tau_2 = {\rm tr}$, this implies that 
  $\ell = A \cdot {\rm tr}$.
\end{proof}

\paragraph{Transposed multiple reduction.} Our next ingredient 
is an algorithm for the following operation. Consider some pairwise
coprime monic polynomials $R_1,\dots,R_N$ in $\K[X]$, 
and let $R=R_1 \cdots R_N$.

We have already mentioned the multiple reduction map $\K[X]/\langle R
\rangle \to \K[X]/\langle R_1 \rangle \times \cdots \times
\K[X]/\langle R_N \rangle$; writing $d=\deg(R)$, this operation can be
done in time $O(\M(d)\log(d))$. In this paragraph, we will discuss the
dual map. On input linear forms $\ell_i:\K[X]/\langle R_i \rangle\to
\K$, this dual map computes the linear form $\ell:\K[X]/\langle R
\rangle$ defined by
$$A\mapsto \sum_{i\le N} \ell_i(A \bmod R_i),$$ where all $\ell_i$ and
$\ell$ are given by means of their values on the monomials bases of
the respective $\K[X]/\langle R_i \rangle$ and $\K[X]/\langle R
\rangle$. In other words, it computes the values
$$ \sum_{i\le N} \ell_i(X^j \bmod R_i),$$ for $j=0,\dots,d-1$.
In~\cite{BoLeSaScWi04}, an algorithm called ${\sf TSimulMod}$ is given
that solves this problem in time $O(\M(d)\log(d))$. Computing the
above values up to index $e$, for some $e>d$, can then be done in time
$O(\M(e))$, see for instance~\cite{BoLeSc03}.

\paragraph{Conclusion.} Let us return to the proof of Theorem~\ref{theo:EtoC}.
On input $\Tt=(T_1,T_2)$, $G\in R_\Tt$ and $\ell:R_\Tt\to \K$, we will
show how to compute the values $\ell(G^c)$, for $0 \le c<\delta_\Tt$.
Using the algorithm of Lemma~\ref{lemma:trdeg2}, we can compute $A \in
R_\Tt$ such the values we want are of the form ${\rm tr}(A G^c)$, for
$0 \le c<\delta_\Tt$.

Let us introduce the triangular set (for the order $X_1<X_2<Y<Z$)
\[
\Tt' \left | 
  \begin{array}{l}
    Z-G(X_1,X_2)\\
    Y-A(X_1,X_2)\\
    T_2(X_1,X_2)\\
    T_1(X_1),
  \end{array}\right .
\]
and let its equiprojectable decomposition for the order $Z<'Y <' X_1
<' X_2$ be given by triangular sets
\[
\Uu^{(i)} \left | 
  \begin{array}{l}
    U_{i,2}(Z,Y,X_1,X_2)\\
    U_{i,1}(Z,Y,X_1)\\
    S_{i}(Z,Y)\\
    R_i(Z),
  \end{array}\right .\qquad 1 \le i \le N.
\]
For $i\le N$, let $\tau_i: R_{\Uu^{(i)}} \to \K$ be the trace modulo
$\Uu^{(i)}$. Since $R_\Tt$ and $R_{\Tt'}$ are isomorphic
$\K$-algebras, the traces in $R_\Tt$ and $R_{\Tt'}$ coincide.  Since
$\langle \Tt'\rangle$ is the intersection of the pairwise coprime
ideals $\langle \Uu^{(i)}\rangle$, it follows (for instance from
Stickelberger's Theorem) that for any index $c$, we have
$${\rm tr}(A\, G^{c}) = \sum_{i \le N} \tau_i(Y Z^{c}).$$ For $i \le
N$, let $\ell_i$ be the linear form $\K[Z]/\langle R_i \rangle \to \K$
defined by $\ell_i(B)=\tau_i(YB)$. Then, one sees that $\tau_i(Y
Z^{c})=\ell_i(Z^{c})$, so that we have
\begin{equation} \label{eq:tAG}
{\rm tr}(A\, G^{c}) = \sum_{i \le N}\ell_i(Z^c).  
\end{equation}
Using
this remark, we can now give the whole algorithm and its running time.
\begin{itemize}
\item First, we compute $A \in R_\Tt$ such that $\ell =A \cdot {\rm
  tr}$. By Lemma~\ref{lemma:trdeg2}, this can be done in time
  $O(\M(d_1)\M(d_2)\log(d_1)\log(d_2))$.
\item Next, we compute the triangular sets ${\bf U}^{(i)}$,
  $i=1,\dots,N$.  This takes time $\EE(4,\delta_\Tt)$.


\item The following step consists of computing the linear forms
  $\tau_i$ (by means of their values on the canonical bases of the
  residue class rings $R_{\Uu^{(i)}}$). We have seen in
  Subsection~\ref{ssec:basics} that we can compute each of those in
  time $O(\M(\delta_{\Uu^{(i)}}))$, so the total time is
  $O(\M(\delta_\Tt))$ by the super-linearity of $\M$.
\item Knowing the linear forms $\tau_i$, we can deduce $\ell_i$ by
  first computing all $Y\cdot \tau_i$ (for a total time of
  $O(\M(\delta_\Tt))$ again), from which the values of $\ell_i$ on the
  basis of $\K[Z]/\langle R_i \rangle$ can be read off.
\item Finally, we obtain ${\rm tr}(A\, G^{c})$, for
  $c=0,\dots,\delta_{\Tt}-1$, using Eq.~\eqref{eq:tAG} and the
  algorithm for transposed multiple reduction; this takes time
  $O(\M(\delta_\Tt) \log(\delta_\Tt)).$
\end{itemize}
Taking a quasi-linear $\M$, and summing all previous costs, the claim
in Theorem~\ref{theo:EtoC} follows.

\section*{Acknowledgments}

Adrien Poteaux is supported by the EXACTA grant of the National
Science Foundation of China (NSFC 60911130369) and the French National
Research Agency (ANR-09-BLAN-0371-01). \'Eric Schost is supported by
NSERC and the Canada Research Chair program. We wish to thank Marc
Moreno Maza and Yuzhen Xie for interesting discussions during the
preparation of this article.

\bibliographystyle{plain} \bibliography{equiproj}
\end{document}